\documentclass[a4paper,UKenglish,cleveref,autoref]{lipics-v2021}

\title{Improved Approximation and Hardness Results for the Monitoring Edge-Geodetic Set Problem} 

\titlerunning{Improved Approximation and Hardness Results for \megset} 

\author{Davide Bilò}{Department of Information Engineering, Computer Science and Mathematics (DISIM), University of L'Aquila, L'Aquila, Italy}{davide.bilo@univaq.it}{0000-0003-3169-4300}{}

\author{Giordano Colli}{Department of Information Engineering, Computer Science and Mathematics (DISIM), University of L'Aquila, L'Aquila, Italy}{giordano.colli@graduate.univaq.it}{0009-0000-8361-2333}{}

\author{Stefano Leucci}{Department of Information Engineering, Computer Science and Mathematics (DISIM), University of L'Aquila, L'Aquila, Italy}{stefano.leucci@univaq.it}{0000-0002-8848-7006}{} 

\authorrunning{D. Bilò, G. Colli, and S. Leucci}

\Copyright{Davide Bilò, Giordano Colli, and Stefano Leucci}

\ccsdesc[500]{Theory of computation~Design and analysis of algorithms}

\keywords{Monitoring edge-geodetic set, approximation algorithms, inapproximability results}

\EventEditors{John Q. Open and Joan R. Access}
\EventNoEds{2}
\EventLongTitle{42nd Conference on Very Important Topics (CVIT 2016)}
\EventShortTitle{CVIT 2016}
\EventAcronym{CVIT}
\EventYear{2016}
\EventDate{December 24--27, 2016}
\EventLocation{Little Whinging, United Kingdom}
\EventLogo{}
\SeriesVolume{42}
\ArticleNo{23}

\usepackage{xspace}
\usepackage{cleveref}

\newcommand{\classNP}{\ensuremath{\mathsf{NP}}\xspace}
\newcommand{\classAPX}{\ensuremath{\mathsf{APX}}\xspace}
\newcommand{\classZTIME}{\ensuremath{\mathsf{ZTIME}}\xspace}
\newcommand{\classDTIME}{\ensuremath{\mathsf{DTIME}}\xspace}
\newcommand{\polylog}{\operatorname{polylog}}

\newcommand{\megset}{\textnormal{\textsc{MEG-Set}}\xspace}

\newcommand{\symmlabelcover}{\textnormal{\textsc{Symmetric Label Cover}}\xspace}

\newcommand{\vcp}{\textnormal{\textsc{VC3P}}\xspace}

\newcommand{\cost}{\operatorname{cost}}
\newcommand{\E}{\mathbb{E}}

\nolinenumbers
\hideLIPIcs

\begin{document}

\maketitle

\begin{abstract}
We study the \emph{Monitoring Edge-Geodetic Set} problem (\megset), which asks for a minimum-cardinality set $M$ of vertices of an undirected graph $G$ of $n$ vertices such that every edge of $G$ belongs to all shortest paths between some pair of vertices in $M$.

We present new approximation and hardness results for \megset. On the positive side, we design a deterministic polynomial-time $O(n^{1/3}\log^{2/3}n)$-approximation algorithm, improving upon the previously best-known $O(\sqrt{n\log n})$ approximation ratio. On the negative side, for every constant $\varepsilon \in (0,1)$, we prove that the problem admits no polynomial-time $\big(2^{\log^{1-\varepsilon} n}\big)$-approximation algorithm, unless $\classNP\subseteq\classDTIME(n^{\polylog n})$. This substantially improves upon the previously known logarithmic inapproximability bound. We also prove that \megset is $\classNP$-hard on planar $2$-degenerate graphs, whereas $\classNP$-hardness was previously known for $1$-apex $2$-degenerate graphs.

We conclude by observing a connection between monitoring edge-geodetic sets and subsetwise distance preservers, i.e., subgraphs that preserve distances between all pairs of vertices in $S\subseteq V(G)$. Specifically, we show that any upper bound on the size of the latter yields a lower bound on the size of the former.
\end{abstract}

\section{Introduction}
We study the problem of computing a minimum-cardinality \emph{monitoring edge-geodetic set} of an input graph $G=(V,E)$ with $n$ vertices, 
where $M \subseteq V$ is a monitoring edge-geodetic set if every edge of $G$ belongs to all shortest paths between some pair of vertices in $M$. We refer to this optimization problem as \megset.

This problem was introduced in \cite{Foucaud} in the context of network monitoring: the graph $G$ represents a communication network and one wishes to equip a subset of hosts (represented by vertices) with probes that communicate with one-another and measure their pairwise distances.  These probles must be suitably placed to detect link failures, i.e., the removal of any edge $e$ of $G$ must cause the distance between some pair of probes to increase. 
In other words, $e$ is required to lie on all the shortest paths between at least one pair probes, and the objective is to minimize the number of necessary probes.

\megset has received considerable attention, especially in the case of restricted graph classes, as we discuss in the sequel,
while for general graphs it is known to be $\classAPX$-hard
and that, unless $\classNP \subseteq \classZTIME(n^{O(\log \log n)}) $, all polynomial-time  approximation algorithms must have an approximation ratio in $\Omega(\log n)$.
On the positive side, a simple approximation algorithm that reduces the problem to a set-cover can be used to obtain a $O(\sqrt{n \log n})$-approximate solution in polynomial time.
For the special case of planar graphs, the problem is $O(n^{1/4} \sqrt{\log n})$-approximable altough it is still open whether \megset can be solved in polynomial-time in such graphs.

\subparagraph*{Our results.} In this paper we show that \megset is $O(n^{\frac{1}{3}} \log^{\frac{2}{3}} n)$-approximable in polynomial time, improving over the only known approximation algorithm for genral graphs, which has an approximation ratio of  $O(\sqrt{n \log n})$.

On the negative side, we prove that, unless $\classNP \subseteq \classDTIME(n^{\polylog n})$, \megset does not admit any polynomial-time $(2^{\log^{1-\varepsilon} n})$-approximation algorithm, for any constant $\varepsilon \in (0, 1)$.
To the best of our knowledge, this is the first result ruling out the existence of polynomial-time approximation algorithms with polylogarithmic approximation factors (under the above assumption).

We also show that exact polynomial-time algorithms are unlikely to exist even in the special case of $2$-degenerate planar graphs, since the problem remains $\classNP$-hard. This answers a question posed in \cite{BCFL2025}, and improves over the previously known $\classNP$-hardness result for $2$-degenerate $1$-apex graphs \cite{BCFL2025arxiv}.

Finally, we link the size of \megset solutions to the number of edges needed to build a subsetwise distance preserver of $G$. This connection yields a novel lower bound on the size of monitoring edge-geodetic sets as a function of the number of vertices and edges of $G$. 

\subparagraph*{Related work.}

The minimum monitoring edge-geodetic set problem was introduced in \cite{Foucaud}, where the authors studied
the size of such sets in general graphs and in special graph classes including
trees, cycles, unicyclic graphs, complete graphs, grids, and
hypercubes. Subsequent work established bounds on the size of monitoring edge-geodetic sets for graph products.
In particular, \cite{haslegrave} considered Cartesian and
strong products, while \cite{XuYBZS24} studied joins and direct products.
Bounds that are function of the graph's girth and chromatic number have been given in \cite{FMMSST2025}. 
Further results concern convex polytopes and
ladder, butterfly, circulant, and Benes networks \cite{TanLWL23},
as well as radix triangular mesh networks and Sierpi\'nski
graphs~\cite{MaJYL24}.

The complexity status of the problem was studied by \cite{haslegrave} which proved that \megset is $\classNP$-hard on general graphs.
Such hardness result holds even in for the special case of
$2$-degenerate $1$-apex graphs \cite{BCFL2025arxiv}, while the problem is $\classAPX$-hard on $3$-degenerate graphs \cite{BCFL2025arxiv,FoucaudMSST26} and not approximable within any constant better than $2$ if the unique games conjecture holds \cite{BCFL2025arxiv}.\footnote{A very recent preprint claims to have proven the unique games conjecture \cite{OAI:Unique-Games-Theorem-2026}.}
On general graphs, all polynomial-time approximation algorithms for \megset must have an approximation ratio in $\Omega(\log n)$, unless $\classNP \subseteq \classZTIME(n^{O(\log \log n)})$ \cite{BCFL2025,BCFL2025arxiv}.\footnote{In \cite{BCFL2025}, the latter condition was stated incorrectly. The full version \cite{BCFL2025arxiv} gives the correct statement.}
Moreover, if the exponetial time hypothesis \cite{ImpagliazzoPZ01} holds, \megset cannot be solved in time $2^{o(n)}$ on sparse $2$-degenerate graphs, nor in $2^{o(\sqrt{n})}$ time on sparse $2$-degenerate $1$-apex graphs \cite{BCFL2025arxiv}.

On the positive side,
polynomial-time algorithms are known for interval graphs,
distance-hereditary graphs, $P_4$-sparse graphs, bipartite permutation
graphs, strongly chordal graphs~\cite{FoucaudPP26}, and chordal graphs \cite{Marcille26}.
The problem also admits FPT algorithms for general graphs parameterized
by clique-width plus diameter, and for chordal graphs
parameterized by treewidth~\cite{FoucaudMSST26}.

As far as approximation algorithms are concerned, a reduction to \textsc{Set Cover} gives polynomial-time solutions of size $O(k^2\log n)$, where $k$ is the optimum, and hence an $O(\sqrt{n\log n})$-approximation \cite{Colli2023,BiloCF024,FoucaudMSST26}.
Approximation algorithms are also known for graph classes that admit efficiently computable balanced vertex separators \cite{BCFL2025}.
As a consequence, \megset is $O(n^{1/4}\sqrt{\log n})$-approximable on $K_h$-minor-free graphs with $h = O(1)$, on bounded-genus graphs, and on $k$-apex graphs for $k=O(n^{1/4})$. Moreover, it is $O(\log^{3/2}n)$-approximable on graphs with bounded treewidth.

\subparagraph*{Paper organization}
We start by giving some preliminary definition in \Cref{sec:preliminaries}.
\Cref{sec:megset-approximation} presents our approximation algorithm for general graphs, while our inapproximability result is given in \Cref{sec:inapproximability}.
\Cref{sec:megset-planar-hardness} proves the $\classNP$-hardness of \megset on planar graphs. 
\Cref{sec:preservers} concludes the paper and shows how lower bounds on the size of monitoring edge-geodetic can be obtained from upper bounds on subsetwise preservers.

\section{Preliminaries}
\label{sec:preliminaries}
We use $\log z$ (resp.\ $\ln z$) to denote the binary (resp.\ natural) logarithm of $z \in \mathbb{R}^+$.
All graphs used throughout the paper are finite, simple, and
undirected. Moreover, the graphs are unweighted unless otherwise specified.
Let $G=(V,E)$ be a possibly weighted graph, and let
$n=|V|$ and $m=|E|$. For $u,v\in V$, we denote by
$d_G(u,v)$ the distance between $u$ and $v$ in $G$, and by
$\mathcal{P}_G(u,v)$ the set of all shortest paths from $u$ to $v$ in $G$.
When the underlying graph is clear from the context, we simply write
$d(u,v)$ and $\mathcal{P}(u,v)$.

For two distinct vertices $x,y\in V$, let $ \mathcal{M}_G(x,y)=\bigcap_{P\in\mathcal{P}_G(x,y)} E(P)$
be the set of edges monitored by the pair $\{x,y\}$. Thus, an edge
$e\in E$ is monitored by $\{x,y\}$ if and only if
$e\in\mathcal{M}_G(x,y)$. When the underlying graph is clear, we
simply write $\mathcal{M}(x,y)$.
For a set $M\subseteq V$ and an edge $e\in E$, we define the
\emph{monitoring set} of $e$ with respect to $M$ as the set of pairs
of vertices in $M$ that monitor $e$, that is, $
    \mathcal{M}_M(e)
    =
    \left\{
        \{x,y\}\in\binom{M}{2} :
        \{x,y\}\text{ monitors }e
    \right\}$.

\subparagraph*{Monitoring edge-geodetic sets.}
A set $M\subseteq V$ is a \emph{monitoring edge-geodetic set} if,
for every $e\in E$, there is at least one pair of vertices in $M$
that monitors $e$, that is $\mathcal{M}_M(e)\neq\emptyset$ for all $e\in E$.
Given an unweighted graph $G$, the \megset problem asks to find a monitoring edge-geodetic set with the minimum number of vertices.

Without loss of generality, we assume that all instances of \megset are connected graphs with at least one edge.\footnote{Disconnected instances can be decomposed into one sub-instance per connected component. A graph with no edges trivially has the empty set as its only optimal solution.}

\section{A polynomial-time \texorpdfstring{$\widetilde{O}(n^{1/3})$}{O~(n\textasciicircum{}(1/3))}-Approximation Algorithm for \megset}
\label{sec:megset-approximation}

In this section we give a $\widetilde{O}(n^{1/3})$-approximation algorithm for \megset.

We start by presenting a randomized algorithm based on rounding the solution of a relaxed ILP that returns a set $M$ such that: (i) $M$ is a monitoring edge-geodetic set with positive constant probability, and (ii) $M$ has expected size $\widetilde{O}(\sqrt{n |M^*|})$, where $M^*$ is a minimum-cardinality monitoring edge-geodetic set.
We then derandomize this algorithm using the method of conditional expectation, obtaining a deterministic algorithm returning a solution of size $\widetilde{O}(\sqrt{n |M^*|})$.

The ILP formulation resembles the one used to provide a similar randomized approximation algorithm for the \textsc{Min-Rep} problem in~\cite{charikarmoses}. However, our analysis exploits structural properties specific to \megset, which are crucial for derandomizing the algorithm without loss of quality.Derandomizing the algorithm of \cite{charikarmoses} without asymptotic loss of quality remains an open problem, and the best determinsitic algorithm is given in \cite{TirodkarV17}.

Our deterministic algorithm is then combined with the known $\widetilde{O}(|M^*|)$ approximation of \megset  \cite{Colli2023,BiloCF024,FoucaudMSST26}, to obtain the claimed result.

\subparagraph*{Algorithm description.}
We can think of a solution to \megset as a set $M$ of vertices, together with a \emph{witness} for each edge $e$, namely some vertex pair in $\binom{M}{2} \cap \mathcal{V}_v(e)$. Notice that, while there might be multiple pairs in $M^2$ that monitor $e$, we require exactly one to be selected as $e$'s witness. 
Consider the following ILP associated with an instance of \megset:

\begin{equation*}
\begin{aligned}
  \text{minimize}\qquad
  &\sum_{v\in V}x_v\\
  \text{subject to}\qquad
  &\sum_{\{a,b\}\in\mathcal{M}_V(e)}w^e_{ab}=1
      && \quad\forall e\in E,\\
  &\sum_{\substack{\{a,b\}\in\mathcal{M}_V(e)\\v\in\{a,b\}}}
      w^e_{ab}\le x_v
      && \quad\forall e\in E,\ v\in V,\\
  &x_v \in \{0,1\}
      && \quad\forall v\in V,\\
  &w^e_{ab} \in \{0,1\}
      && \quad\forall e\in E,\ \{a,b\}\in\mathcal{M}_V(e).
\end{aligned}
\end{equation*}

The above formulation contains a decision variable $x_v$ for each vertex $v \in V$, which encodes whether $v$ is selected in the returned solution. Moreover, for each edge $e$ and all pairs $\{a,b\} \in\mathcal{M}_V(e)$, the decision variable $w_{ab}^e$ encodes whether $\{a,b\}$ is $e$'s witness.
The first constraint ensures that exactly one witness is selected for each edge, while the second constraint ensures that if $\{a,b\}$ is a witness of some edge $e$, then both $a$ and $b$ must be selected in the solution, i.e., the variables $x_a$ and $x_b$ are both set to $1$.

Our algorithm computes an optimal solution
$(\widetilde{x},\widetilde{w})$ to the relaxation of the ILP
obtained by allowing all variables to take values in $[0,1]$. Here,
$\widetilde{x}=(\widetilde{x}_v)_{v\in V}$ is the vector of
vertex-selection variables, while
$\widetilde{w}=(\widetilde{w}_{ab}^e)_{e\in E,\,
\{a,b\}\in\mathcal{M}_V(e)}$ is the vector of witness-selection
variables.

Let $t = 4\ln(4m)$. For a generic vector $x = (x_v)_{v \in V}$ with $x_v \in [0,1]$, we denote with $M(x)$ a set obtained by independently selecting each $v \in V$ with probability $p(x_v) = \min\{\sqrt{t x_v}, 1\}$. 
The algorithm returns $M(\widetilde{x})$.

\subparagraph*{Algorithm analysis.}
We now argue that, with probability at least $\frac{1}{4}$, $M(\widetilde{x})$ is a monitoring edge-geodetic set of $G$ having size $O(\sqrt{n |M^*| \log n})$.

Given a generic assignment $w$ to the variables $w_{ab}^e$ and an edge $e$,
we define an edge-weighted  \emph{auxiliary graph} $G_e(w)$ having  $V$ as its vertex set, and containing an edge $(u,v)$
of weight $w_{ab}^{e}$ for each pair $\{a,b\} \in \mathcal{M}_V(e)$.
We also let $G^+(w)$ be the spanning subgraph of $G_e(w)$ that contains only edges of positive weight.
We start by observing that all $G_e(w)$ (and hence all $G^+_e(w)$) are bipartite.
\begin{lemma}\label{lem:monitoring-pair-graph-bipartite}
All auxiliary graphs are bipartite.
\end{lemma}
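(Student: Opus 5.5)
Let $e=(u,v)$ and partition $V$ into $V_u=\{z\in V : d(z,u)<d(z,v)\}$, $V_v=\{z\in V : d(z,v)<d(z,u)\}$ and $V_=\{z : d(z,u)=d(z,v)\}$. Since $G$ is connected and $u,v$ are adjacent, $|d(z,u)-d(z,v)|\le 1$ for every $z$, so these three sets cover $V$. The plan is to show that every edge of $G_e(w)$, i.e.\ every pair $\{a,b\}\in\mathcal{M}_V(e)$, has one endpoint in $V_u$ and the other in $V_v$. Then $(V_u, V\setminus V_u)$ is a bipartition of $G_e(w)$. The weights $w$ play no role, so the same argument covers $G^+_e(w)$ as a spanning subgraph.

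The key step is the claim itself. Suppose $\{a,b\}$ monitors $e$. Then $e$ lies on some (indeed every) shortest $a$--$b$ path $P$. Orient $P$ from $a$ to $b$, and say it traverses $u$ before $v$ (the other case is symmetric, with the roles of $u$ and $v$ swapped).
\begin{itemize}
\item Subpaths of shortest paths are shortest, so $d(a,v)=d(a,u)+1$, hence $a\in V_u$.
\item Likewise $d(b,u)=d(b,v)+1$, hence $b\in V_v$.
\end{itemize}
Therefore each pair has exactly one endpoint in $V_u$ and one in $V_v$. In particular no edge of $G_e(w)$ has both endpoints in $V_u$, nor both in $V\setminus V_u$, and $G_e(w)$ is bipartite.

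There is no real obstacle here. The only point to be careful about is that a pair monitoring $e$ has $e$ on at least one shortest path, which holds because $\mathcal{P}(a,b)$ is nonempty when $G$ is connected. Note that we only use that $e$ lies on some shortest path, not on all of them. The degenerate case $a=b$ cannot occur, since the pairs are in $\binom{V}{2}$.
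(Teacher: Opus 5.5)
Your proof is correct and is essentially the paper's argument. It uses the same bipartition $V_u=\{z\in V : d(z,u)<d(z,v)\}$ versus $V\setminus V_u$, together with the observation that a shortest $a$--$b$ path through $e$ forces $a$ and $b$ to lie on opposite sides. Your remark that connectedness is needed for the three sets to cover $V$ is unnecessary, since trichotomy already gives this, but it is harmless.
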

\begin{proof}
Fix an edge $e=(u,v)\in E$, and define
$V_u=\{z\in V : d_G(z,u) < d_G(z,v)\}$ and $V_v= V\setminus V_u$.
Consider an edge $(a,b)$ of $G_e$. Since $\{a,b\}$ monitors $e$,
there exists a shortest path $\pi$ from $a$ to $b$ in $G$ that traverses $e$.
If $\pi$ traverses $e$ from $u$ to $v$, then $d_G(a,u)<d_G(a,v)$ and
$d_G(b,v)<d_G(b,u)$, showing that $a\in V_u$ and $b\in V_v$.
Symmetrically, if $\pi$ traverses $e$ from $v$ to $u$, then
$a\in V_v$ and $b\in V_u$.

Then, every edge of $G_e$ has one endvertex in $V_u$ and the other in
$V_v$, hence the claim.
\end{proof}

We now show that there exists an optimal solution $(\widetilde{x}, \overline{w})$ such that all $G_e^+(\overline{w})$ are forests, and that $\overline{w}$ can be found efficiently from $\widetilde{w}$.

\begin{lemma}\label{lem:megset-lp-forest-support}
  Given an optimal solution $(\widetilde{x}, \widetilde{w})$  to the relaxation of the ILP, one can compute in polynomial time an optimal
  solution $(\widetilde{x},\overline{w})$ to the relaxation of the ILP guaranteeing that
  all $G_e^+(\overline{w})$, for $e\in E$, are forests.
\end{lemma}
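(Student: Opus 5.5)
We handle each edge $e$ separately and keep $\widetilde{x}$ fixed; the objective depends only on $x$, so optimality is preserved automatically. The witness variables of $e$ enter only two kinds of constraints: the equality $\sum_{\{a,b\}\in\mathcal{M}_V(e)} w^e_{ab}=1$ and the capacity constraints $\sum_{\{a,b\}\ni v} w^e_{ab}\le \widetilde{x}_v$. In words, $w^e$ is a fractional flow of total value $1$ on the edges of $G_e$, with the load at each vertex $v$ bounded by $\widetilde{x}_v$.

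The plan is to cancel cycles in the support. Suppose $G_e^+(w)$ contains a cycle $C$. By \Cref{lem:monitoring-pair-graph-bipartite}, $C$ has even length. Write its edges alternately as $f_1,g_1,f_2,g_2,\dots$ and modify $w$ by $+\delta$ on every $f_i$ and $-\delta$ on every $g_i$. Each vertex of $C$ is incident to exactly one $f$-edge and one $g$-edge of $C$, so every vertex load is unchanged. Since $|C|$ is even, there are equally many $f$- and $g$-edges, so the total sum stays $1$.

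We choose $\delta=\min_i w_{g_i}$, the largest value that keeps all weights nonnegative. Upper bounds $w\le 1$ then follow from the equality constraint and nonnegativity. At least one $g$-edge drops to weight $0$ and leaves the support, and no edge outside $C$ changes. Hence each step strictly decreases $|E(G_e^+)|$ while preserving feasibility.

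Repeating this until no cycle remains takes at most $|\mathcal{M}_V(e)|\le\binom{n}{2}$ steps per edge $e$. Each step finds a cycle by DFS in polynomial time, so over all $m$ edges the procedure is polynomial. Since different $e$ involve disjoint variables and the $x$-values never change, the final $\overline{w}$ satisfies all constraints together with $\widetilde{x}$. The objective equals that of $(\widetilde{x},\widetilde{w})$, so $(\widetilde{x},\overline{w})$ is optimal.

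The one point that needs care is bipartiteness. On an odd cycle, alternating signs cannot be assigned consistently: some vertex would get two edges of the same sign, changing its load. \Cref{lem:monitoring-pair-graph-bipartite} is exactly what rules this out.
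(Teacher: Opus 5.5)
Your proposal is correct and follows essentially the same route as the paper: cancel an even cycle in $G_e^+$ (even by \Cref{lem:monitoring-pair-graph-bipartite}) by alternating $\pm\delta$, so that vertex loads and the total witness weight are unchanged and the support strictly shrinks, over $O(n^2 m)$ iterations in total. The only difference is cosmetic. The paper takes $\delta$ to be the minimum weight on the whole cycle and orients the alternation so that this edge is decreased, whereas you take the minimum over the decreased edges, which works for either orientation.
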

\begin{proof}
Instead of directly proving the claim, we show the following: if $G_e^+(\widetilde{w})$ contains a cycle,
then there exists a procedure that efficiently finds another vector $\overline{w}$ such that:
(i) $\overline{w}$ can only differ
from $\widetilde{w}$ on the values of the variables $\overline{w}^e_{ab}$ associated with $e$;
(ii) $(\widetilde{x}, \overline{w})$ is a feasible solution of the relaxed ILP;
and (iii) $G_e^+(\overline{w})$ has fewer edges than $G_e^+(\widetilde{w})$. 
Observe that (ii) implies the optimality of $(\widetilde{x}, \overline{w})$ as the objective function only depends on $\widetilde{x}$.
The claim follows by repeatedly applying the above procedure (no more than $O(n^2 |E|)$ times). 

Suppose then that $G_e^+(\widetilde{w})$ contains a cycle $C$.
By \cref{lem:monitoring-pair-graph-bipartite}, $C$ has some even
length $k$. Let $f_1$ be an edge of minimum weight in $C$, let $\delta$ be the weight of $f_1$, and let $f_1, f_2, \dots, f_k$ be the edges of $C$, in some cyclic order starting from $f_1$.
The vector $\overline{w}$ is obtained by subtracting $\delta$ from (resp. adding $\delta$ to) all the variables $\widetilde{w}^e_{ab}$ such that $f_i = (a,b)$ for an odd (resp.\ even) index $i$. Then (i) holds by construction.

Observe that all affected variables remain non-negative. Moreover, since $k$ is even, the number of variables increased by $\delta$ equals the number of variables decreased by $\delta$; hence, their sum in $\widetilde{w}$ coincides with their sum in $\overline{w}$. Therefore, the first constraint of the linear program remains satisfied (and shows that no variable in $\overline{w}$ exceeds $1$).
Regarding the second constraint, observe that if a vertex $v$ is not part of $C$, then the left-hand sum is unaffected. 
If $v$ is part of $C$, then exactly two variables of the left-hand sum are affected: since the two edges of $C$ incident to $v$ have indices of opposite parity, one variable decreases by $\delta$, while the other increases by $\delta$. Hence, the total sum does not change and the second constraint is also satisfied. This proves (ii).

Finally, (iii) holds since $f_1$ is an edge of $G_e^+(\widetilde{w})$ but not of $G_e^+(\overline{w})$, and
only non-zero variables $\widetilde{w}^e_{ab}$ can have a different value in $\overline{w}$.
\end{proof}

\begin{lemma}\label{lem:megset-heavy-star-forest}
Let $e \in E$, and let $(\widetilde{x},\overline{w})$ be an optimal solution to the relaxation of the ILP
such that $G_e^+(\overline{w})$ is a forest. 
There exists a collection $\mathscr{S}_e$ of vertex-disjoint stars contained in
$G_e^+(\overline{w})$, each with at least one edge, such that
$\sum_{S\in\mathscr{S}_e}\sum_{(u,v) \in E(S)}\overline{w}_{uv}^e \ge 1/2$.
\end{lemma}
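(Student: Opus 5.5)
The plan is to decompose each tree of the forest $G_e^+(\overline{w})$ into vertex-disjoint stars whose edge sets partition the tree's edges, and then keep half of them by weight. By the first constraint of the relaxed ILP, the total weight $\sum_{\{a,b\}\in\mathcal{M}_V(e)}\overline{w}^e_{ab}$ equals $1$. Edges of zero weight are not in $G_e^+(\overline{w})$, so the total edge weight of $G_e^+(\overline{w})$ is also $1$. It therefore suffices to find a vertex-disjoint family of stars covering at least half of the edge weight.

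The cleanest route is a 2-coloring of the edges followed by a parity argument. Root each tree $T$ of the forest at an arbitrary vertex $r$. Every non-root vertex $v$ has a unique parent edge, which connects $v$ to its parent $\mathrm{par}(v)$. For a vertex $u$, let $S_u$ be the star consisting of $u$ together with all edges from $u$ to its children. The family $\{S_u\}$ partitions $E(T)$, since each edge is the child-edge of exactly one vertex, namely its upper endpoint.

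Two stars $S_u$ and $S_{u'}$ with $u\neq u'$ can intersect in a vertex only if one center is the parent of the other, say $u'=\mathrm{par}(u)$, in which case they share $u$. Stars whose centers lie at depths of the same parity are therefore vertex-disjoint. To see this, note that $S_u$ occupies depths $\mathrm{depth}(u)$ and $\mathrm{depth}(u)+1$, and any shared vertex forces a parent--child relation between the centers, hence depths of different parity. Also, two distinct centers at the same depth have disjoint children sets.

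Hence we define $\mathscr{S}^{\mathrm{even}}$ (resp.\ $\mathscr{S}^{\mathrm{odd}}$) as the collection of stars $S_u$ with at least one edge, over all trees of the forest, whose center $u$ has even (resp.\ odd) depth. Each collection consists of vertex-disjoint stars, each with at least one edge. Stars from different trees are trivially disjoint. Together the two collections partition all edges of $G_e^+(\overline{w})$, so their weights sum to $1$. One of them has weight at least $1/2$, and we let $\mathscr{S}_e$ be that one.

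The main point requiring care is the vertex-disjointness verification, which needs to cover both overlap cases. The case where a shared vertex is the center of one star and a leaf of the other is the parent--child case with differing parity. The case where a shared vertex is a leaf of both stars is impossible, because each vertex has a unique parent. Everything else is routine. Bipartiteness from \cref{lem:monitoring-pair-graph-bipartite} is not needed here beyond the forest structure supplied by \cref{lem:megset-lp-forest-support}.
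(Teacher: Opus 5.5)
Your proposal is correct and is essentially the paper's argument. The paper also roots each tree and splits the edges by the depth parity of their parent endpoint, keeps the heavier class, and takes the stars it induces centered at their shallowest vertex. Your explicit check of vertex-disjointness (center--leaf overlaps force a parent--child relation between the centers, and leaf--leaf overlaps are ruled out by unique parents) spells out a step the paper leaves implicit.
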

\begin{proof}
By the first constraint of (the relaxation of) the ILP, the sum of all edge-weights of $G_e^+(\overline{w})$ is $1$.

Root each tree of $G_e^+(\overline{w})$ in an arbitrary vertex.
Partition the edges of $G_e^+(\overline{w})$
into two sets $E_{\text{even}}$ and $E_{\text{odd}}$ according to whether the parent endvertex lies at
an even or an odd level in the containing tree.
Then, the sum of the weights of the edges in at least one set $E^* \in \{ E_{\text{even}}, E_{\text{odd}} \}$ must be at least $\frac{1}{2}$. We select $\mathscr{S}_e$ as the collection of stars induced by $E^*$, where each star is centered in its vertex of lowest depth.
\end{proof}

Fix some $\overline{w}$ such that $(\widetilde{x}, \overline{w})$ is an optimal solution to the relaxation of the ILP
and, for each edge $e$, let $\mathscr{S}_e = \{S_1, \dots, S_k\}$ be a collection of vertex-disjoint star subgraphs of $G_e^+(\overline{w})$, each with at least one edge, such that $\sum_{S\in\mathscr{S}_e}\sum_{(u,v) \in E(S)}\overline{w}_{uv}^e \ge 1/2$. Denote by $c_i$ and by $L_i$ the center and the leaves of $S_i$, respectively.
For a vector $x = (x_v)_{v \in V}$ where $x_v \in [0,1]$, define:
\[
  f_e(x) = \prod_{i=1}^k  \left( 
    (1 - p(x_{c_i})) + p(x_{c_i}) \prod_{v \in L_i} (1- p(x_v)) 
    \right).
\]

Given a set $M$, we say that $e$ is \emph{covered} by a star $S_i \in \mathscr{S}_e$ w.r.t.\ $M$ if there exists at least one edge in $S_i$ having both endpoints in $M$. Notice that if an edge $e$ is covered by some star in $\mathscr{S}_e$ w.r.t. $M$, then $e$ is monitored by $M$.

\begin{lemma}
  Fix an edge $e \in E$ and a vector $x$. The probability that $e$ is not covered by any star in $\mathscr{S}_e$ w.r.t.\ $M(x)$ is at most $f_e(x)$.
\end{lemma}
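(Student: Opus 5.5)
The plan is to use the vertex-disjointness of the stars in $\mathscr{S}_e$ to split the event into independent pieces, and then compute the probability of each piece exactly by conditioning on the star's center.

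\textbf{Step 1: reduce to single stars.} The edge $e$ is not covered by any star w.r.t.\ $M(x)$ exactly when, for every $i\in\{1,\dots,k\}$, the event $A_i$ holds, where $A_i$ means that no edge of $S_i$ has both endpoints in $M(x)$. Whether $A_i$ occurs depends only on the random choices for the vertices of $S_i$, namely $c_i$ and $L_i$. The stars $S_1,\dots,S_k$ are vertex-disjoint, and each vertex is placed in $M(x)$ independently. Hence the events $A_1,\dots,A_k$ are mutually independent, and
\[
\Pr\Big[\bigcap_{i=1}^k A_i\Big] = \prod_{i=1}^k \Pr[A_i].
\]

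\textbf{Step 2: compute $\Pr[A_i]$ for a single star.} Every edge of $S_i$ joins the center $c_i$ to a leaf in $L_i$, since $S_i$ is a star with center $c_i$ and leaf set $L_i$. So $A_i$ holds if and only if either $c_i\notin M(x)$, or $c_i\in M(x)$ and no leaf of $L_i$ is in $M(x)$. These two cases are disjoint. By independence of the vertex choices,
\[
\Pr[A_i] = (1-p(x_{c_i})) + p(x_{c_i})\prod_{v\in L_i}(1-p(x_v)).
\]

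\textbf{Step 3: conclude.} Multiplying over $i$ gives exactly $f_e(x)$, so the probability in the statement equals $f_e(x)$ and is in particular at most $f_e(x)$. The identification of the star's edges as center--leaf pairs uses the fact, from the proof of \cref{lem:megset-heavy-star-forest}, that each star in $\mathscr{S}_e$ is induced by the edges below its lowest-depth vertex. For this it suffices that $c_i$ is the designated center. A star consisting of a single edge also fits this description, with either endpoint as center.

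There is no real obstacle. The only point needing care is the independence claim in Step 1, which relies on the stars being vertex-disjoint. Without disjointness, a shared vertex would correlate the events $A_i$, and the product formula could fail.
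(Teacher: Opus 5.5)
Your proof is correct and follows essentially the same route as the paper. The paper also splits the non-coverage of each star into the disjoint cases $c_i\notin M(x)$ and $c_i\in M(x)$ with $L_i\cap M(x)=\emptyset$, and then multiplies over the vertex-disjoint stars using independence to obtain exactly $f_e(x)$.
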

\begin{proof}
  We say that an edge $(c_i, v)$ of $S_i$ is \emph{selected} if both
  $c_i$ and $v$ are in $M(x)$.
  Observe that $e$ is covered iff at least one edge of some star $S_i$ is selected.
  
  No edge of $S_i$ is selected exactly when one of the following two disjoint events happens: (i) $c_i \not\in M(x)$, or (ii) $c_i \in M(x)$ and $L_i \cap M(x) = \emptyset$. 
  Then, (i) happens with probability $1 - p(x_{c_i})$, and (ii) happens with probability $p(x_{c_i}) \prod_{v \in L_i} (1 - p(x_v))$, where we used the fact that vertices are added indepently to $M(x)$.

  Since the stars $S_1, \dots, S_k$ are vertex-disjoint, the events ``$S_i$ contains a selected edge'' are independent. Hence, the probability that no star contains a selected edge is:
  \begin{equation*}
    \prod_{i=1}^k \left( (1 - p(x_{c_i}))   +  p(x_{c_i}) \prod_{v \in L_i} (1 - p(x_v)) \right)
    = f_e(x). \tag*{\qedhere}
  \end{equation*}
\end{proof}

\begin{lemma}
  \label{lemma:apx_ub_f_e}
  For every $e \in E$, $f_e( \widetilde{x} ) < \frac{1}{4m}$.
\end{lemma}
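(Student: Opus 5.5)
We bound each factor of $f_e(\widetilde{x})$ separately and then multiply. Fix a star $S_i$ with center $c=c_i$ and leaves $L=L_i$, and let $W_i=\sum_{v\in L}\overline{w}^e_{cv}$ be its total weight, so that $\sum_i W_i\ge 1/2$ by \Cref{lem:megset-heavy-star-forest}. The target is the per-star bound
\[
(1-p(\widetilde{x}_c)) + p(\widetilde{x}_c)\prod_{v\in L}(1-p(\widetilde{x}_v)) \le e^{-\frac{t}{2}W_i}\cdot(\text{something}\le 1),
\]
or more precisely a factor at most $\exp(-c\,t\,W_i)$ for a suitable constant $c$. Taking the product then gives $f_e(\widetilde{x})\le \exp(-c\,t/2)$. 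With $t=4\ln(4m)$ and $c\ge 1/2$, this yields $\exp(-\ln(4m))=1/(4m)$, and the strict inequality comes from a strict step somewhere, for instance $1-z<e^{-z}$ for $z>0$.

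The feasibility constraints supply the needed relations. Since each edge $(c,v)$ of $S_i$ has positive weight, the second ILP constraint gives $\overline{w}^e_{cv}\le \widetilde{x}_v$ for each leaf and $W_i\le \widetilde{x}_c$ for the center; also $W_i\le 1$. The factor is $1-p_c\bigl(1-\prod_v(1-p_v)\bigr)$, and we need a lower bound on $p_c\bigl(1-\prod_v(1-p_v)\bigr)$. If any $p$ equals $1$, the corresponding terms simplify and the bound is easy. Otherwise $p_c=\sqrt{t\widetilde{x}_c}\ge\sqrt{tW_i}$ and $p_v\ge\sqrt{t\,\overline{w}^e_{cv}}$. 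Using concavity of the square root, $\sum_v\sqrt{t\overline{w}_{cv}}\ge\sqrt{t W_i}$. Hence $\prod_v(1-p_v)\le \exp(-\sum_v p_v)\le \exp(-\sqrt{tW_i})$, so $1-\prod_v(1-p_v)\ge 1-e^{-\sqrt{tW_i}}$.

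The main obstacle is turning $\sqrt{tW_i}\,(1-e^{-\sqrt{tW_i}})$ into something linear in $tW_i$. Set $y=\sqrt{tW_i}$. For $y\le 1$ we have $1-e^{-y}\ge y(1-1/e)$, so the product is at least $(1-1/e)y^2=(1-1/e)tW_i$. For $y>1$ the product is only at least roughly $1-1/e$, which is not linear in $y^2$. So I would split the stars. A star with $tW_i\ge 1$ that is \emph{not} capped contributes a factor at most $1/e$ if $p_c=1$ and $p_v=1$ for some leaf (then the factor is actually $0$). In general, for such a star one should instead use $p_c\ge\min\{1,\cdot\}$ and $p_v=\min\{1,\cdot\}$ carefully; when both $p_c=1$ and $\sum_v p_v\ge\sqrt{tW_i}$, the factor is at most $e^{-\sqrt{tW_i}}$, which is strong. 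The worry is the regime of moderate $W_i$, and I would handle it with the cleaner bound $1-p_c(1-e^{-y})\le \exp\bigl(-(1-1/e)\min(y,1)\,y\bigr)$. I would then check that the constant $(1-1/e)$ times $t/2$ still beats $\ln(4m)$. Since $(1-1/e)\cdot 2<2$, this may lose a constant factor, and I expect the calibration of $t$ (the constant $4$) is exactly chosen so that this case analysis closes. The alternative is a tighter inequality such as $1-e^{-y}\ge y/(1+y)$, combined with the observation that stars with large $W_i$ alone already drive the product below $1/(4m)$.
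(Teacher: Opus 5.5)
Your proposal has a genuine gap. You correctly identify the regime $y=\sqrt{tW_i}>1$ as the obstacle, but none of the fixes you sketch closes it.
\begin{itemize}
\item Your ``cleaner bound'' $1-p_c(1-e^{-y})\le\exp(-(1-1/e)\min(y,1)\,y)$ reduces, for $y>1$, to $\exp(-(1-1/e)\sqrt{tW_i})$. This is not of the form $\exp(-c\,tW_i)$, so multiplying over the stars no longer gives $\exp(-c\,t\sum_i W_i)$.
\item The fallback, that a heavy star alone drives the product below $1/(4m)$, is false. Since $W_i\le 1$, the bound $e^{-\sqrt{tW_i}}$ is at best $e^{-\sqrt{t}}=e^{-2\sqrt{\ln(4m)}}$, which is far above $1/(4m)$ for large $m$.
\item Your remark that ``if any $p$ equals $1$ the bound is easy'' hides exactly the hard case: $p(\widetilde{x}_c)=1$ with every leaf uncapped.
\end{itemize}

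Two observations are missing.

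\emph{An uncapped center forces $y<1$.} If $p(\widetilde{x}_c)<1$, then $tW_i\le t\widetilde{x}_c<1$. So the large-weight regime can only occur when the center is capped, and a star with $tW_i\ge 1$ and an uncapped center cannot exist. For $y<1$ your own argument already works: $1-e^{-y}\ge(1-1/e)y$ gives a factor at most $\exp(-(1-1/e)\,tW_i)$. The paper uses the weaker $1-e^{-z}\ge z/2$ here.

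\emph{A capped center needs a linear bound.} When $p(\widetilde{x}_c)=1$, the factor is just $\prod_{v\in L}(1-p(\widetilde{x}_v))$. It vanishes if some leaf is capped. Otherwise every leaf has $t\widetilde{x}_v<1$, so $p(\widetilde{x}_v)=\sqrt{t\widetilde{x}_v}\ge t\widetilde{x}_v\ge t\,\overline{w}^e_{cv}$. Summing gives $\sum_{v} p(\widetilde{x}_v)\ge tW_i$, hence a factor at most $e^{-tW_i}$. In this case you must use $\sqrt{z}\ge z$ on $[0,1]$, not subadditivity of the square root. Subadditivity only yields $e^{-\sqrt{tW_i}}$, which is weaker than $e^{-tW_i/2}$ as soon as $tW_i>4$.

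With both cases, every factor is strictly below $e^{-tW_i/2}$, and the product is below $e^{-t/4}=1/(4m)$.

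Your worry about the constant is unfounded. Since $1-1/e>1/2$, the calibration $t=4\ln(4m)$ closes your $y<1$ case with room to spare.
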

\begin{proof}
Fix $e\in E$ and use the previously defined collection
$\mathscr{S}_e=\{S_1,\dots,S_k\}$, with centers $c_i$ and leaf
sets $L_i$. For each $i$, let $s_i=\sum_{v\in L_i}\overline{w}_{c_i v}^e$, and observe that $s_i > 0$ and that $\sum_{i=1}^k s_i \ge 1/2$.

To ease the notation, define 
$\gamma_i=1-p(\widetilde{x}_{c_i}) + p(\widetilde{x}_{c_i}) \prod_{v\in L_i}(1-p(\widetilde{x}_v))$
so that
$f_e(\widetilde{x})=\prod_{i=1}^k \gamma_i$.

Since $(\widetilde{x},\overline{w})$ is feasible, the second constraint gives
$\widetilde{x}_{c_i}\ge s_i$ and
$\widetilde{x}_v\ge\overline{w}_{c_i v}^e$ for every $v\in L_i$.
We show that $\gamma_i < e^{-t s_i/2}$.

Suppose first that $p(\widetilde{x}_{c_i})=1$. If some leaf $v\in L_i$
also has $p(\widetilde{x}_v)=1$, then $\gamma_i=0 < e^{-t s_i/2}$. Otherwise,
$p(\widetilde{x}_v) = \sqrt{t \widetilde{x}_v } <1$ for every $v \in L_i$,  and  $p(\widetilde{x}_v)= \sqrt{t\widetilde{x}_v}
 \ge t\widetilde{x}_v\ge t\overline{w}_{c_i v}^e$. Then:
\[
  \gamma_i=\! \prod_{v\in L_i} \! (1-p(\widetilde{x}_v))
  \le\! \prod_{v\in L_i} \! e^{- p(\widetilde{x}_v)}
  =\exp\!\left(\!-\!\!\sum_{v\in L_i}p(\widetilde{x}_v)\right)
  \le \exp\!\left(\!-t\! \sum_{v\in L_i} \overline{w}^e_{c_i v}\right)
  = e^{-t s_i} < e^{-\frac{t s_i}{2}}.
\]

Suppose now that $p(\widetilde{x}_{c_i})<1$. Then
$t s_i\le t\widetilde{x}_{c_i}<1$ and
$p(\widetilde{x}_{c_i}) = \sqrt{t \widetilde{x}_{c_i}} \ge\sqrt{t s_i}$. If some leaf $v\in L_i$ has $p(\widetilde{x}_v)=1$, then:
$ \gamma_i=1-p(\widetilde{x}_{c_i})
  \le e^{-p(\widetilde{x}_{c_i})}
  \le e^{-\sqrt{t s_i}} < e^{-\frac{t s_i}{2}}$.
Otherwise, $p(\widetilde{x}_v) <1$ for all $v \in L_i$ and, from the subadditivity of the square root on nonnegative arguments,
$
  \sum_{v\in L_i}p(\widetilde{x}_v)
  =\sqrt{t}\sum_{v\in L_i}\sqrt{\widetilde{x}_v}
  \ge\sqrt{t}\sum_{v\in L_i}\sqrt{\overline{w}_{c_i v}^e}
  \ge\sqrt{t s_i}$.

Using $p(\widetilde{x}_v) \ge \sqrt{t \overline{w}^e_{c_i v}} > 0$ and the inequality $1-e^{-z}\ge z/2$ for $z\in[0,1]$, we have:
\[
  1-\prod_{v\in L_i}(1-p(\widetilde{x}_v))
  > 1-\prod_{v\in L_i} e^{-p(\widetilde{x}_v)}
  = 1-\exp\left(-\sum_{v\in L_i}p(\widetilde{x}_v)\right)
  \ge 1 - e^{-\sqrt{t s_i}}
  \ge\frac{\sqrt{t s_i}}{2},
\]
which implies:
\[
  \gamma_i
  =1-p(\widetilde{x}_{c_i})
      \left(1-\prod_{v\in L_i}(1-p(\widetilde{x}_v))\right)
    < 1- \sqrt{t s_i} \cdot \frac{\sqrt{t s_i}}{2}
    = 1 - \frac{t s_i}{2}
    \le e^{-t \frac{s_i}{2}}.
\]

Hence, $\gamma_i < e^{-t s_i/2}$ regardless of the case. Multiplying these upper bounds across all stars in $\mathscr{S}_e$ and using $t=4\ln(4m)$ yields:
\[
  f_e(\widetilde{x})
  = \prod_{i=1}^k \gamma_i
  < \prod_{i=1}^k e^{-t s_i/2}
  =\exp \left(-\frac{t}{2}\sum_{i=1}^k s_i\right)
  \le e^{-\frac{t}{4}}
  =\frac{1}{4m}. \tag*{\qedhere}
\]
\end{proof}

\noindent The definition of coverage, \Cref{lemma:apx_ub_f_e}, and the union bound yield the following: 

\begin{corollary}
  \label{cor:apx_success_probability}
  With probability at least $\frac{3}{4}$, $M(\widetilde{x})$ is a monitoring edge-geodetic set of $G$.
\end{corollary}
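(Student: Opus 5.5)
The plan is a union bound over the $m$ edges, using the two preceding lemmas together with the observation that coverage implies monitoring.

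First, fix an edge $e \in E$. If $e$ is covered by some star $S_i \in \mathscr{S}_e$ w.r.t.\ $M(\widetilde{x})$, then some edge $(c_i,v)$ of $S_i$ has both endpoints in $M(\widetilde{x})$. Since $S_i$ is a subgraph of $G_e^+(\overline{w})$, which is in turn a subgraph of $G_e$, this edge corresponds to a pair $\{c_i,v\} \in \mathcal{M}_V(e)$. Hence $\{c_i,v\} \in \mathcal{M}_{M(\widetilde{x})}(e)$, and $e$ is monitored. Consequently, the event ``$e$ is not monitored by $M(\widetilde{x})$'' is contained in the event ``$e$ is not covered by any star of $\mathscr{S}_e$ w.r.t.\ $M(\widetilde{x})$''.

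Second, the preceding lemma bounds the probability of the latter event by $f_e(\widetilde{x})$, and \Cref{lemma:apx_ub_f_e} gives $f_e(\widetilde{x}) < \frac{1}{4m}$. By the union bound over the $m$ edges of $G$, the probability that some edge is unmonitored is less than $m \cdot \frac{1}{4m} = \frac14$. With the complementary probability, at least $\frac34$, every edge has a monitoring pair in $M(\widetilde{x})$, which is exactly the definition of a monitoring edge-geodetic set.

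There is no real obstacle here. The one step needing care is that the stars in $\mathscr{S}_e$ live inside $G_e^+(\overline{w})$, so their edges are genuine monitoring pairs of $e$. This holds by construction of the auxiliary graph, since its edges are indexed exactly by $\mathcal{M}_V(e)$.
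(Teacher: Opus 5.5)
Your proposal is correct and follows the paper's own argument: coverage by a star of $\mathscr{S}_e$ implies monitoring, the preceding lemma together with \Cref{lemma:apx_ub_f_e} bounds each edge's failure probability by $f_e(\widetilde{x}) < \frac{1}{4m}$, and a union bound over the $m$ edges gives failure probability below $\frac14$. Your explicit check that star edges lie in $G_e$, and hence are genuine monitoring pairs in $\mathcal{M}_V(e)$, is the one detail the paper leaves implicit.
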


\noindent Finally, we upper bound the expected size of the returned set $M(\widetilde{x})$:

\begin{lemma}
  \label{lemma:apx_expected_solution_size}
  $\E[|M(\widetilde{x})|] =\sum_{v\in V}p(\widetilde{x}_v) \le \sqrt{n t |M^*|}$.
\end{lemma}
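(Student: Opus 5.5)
The equality $\E[|M(\widetilde{x})|]=\sum_{v\in V}p(\widetilde{x}_v)$ holds because $|M(\widetilde{x})|$ is a sum of indicator variables. Each vertex $v$ enters $M(\widetilde{x})$ with probability $p(\widetilde{x}_v)$, so linearity of expectation gives the identity directly. The work is therefore in bounding $\sum_v p(\widetilde{x}_v)$. I will do this in two steps: first pass from $p$ to the square root, then apply Cauchy--Schwarz.

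For the first step, recall $p(\widetilde{x}_v)=\min\{\sqrt{t\widetilde{x}_v},1\}\le\sqrt{t\widetilde{x}_v}$ for every $v$. This gives
\[
\sum_{v\in V}p(\widetilde{x}_v)\le\sqrt{t}\sum_{v\in V}\sqrt{\widetilde{x}_v}.
\]
Applying Cauchy--Schwarz to the vectors $(1)_{v\in V}$ and $(\sqrt{\widetilde{x}_v})_{v\in V}$ yields
\[
\sum_{v\in V}\sqrt{\widetilde{x}_v}\le\sqrt{n}\,\sqrt{\textstyle\sum_{v\in V}\widetilde{x}_v}.
\]
Equivalently, one can use concavity of the square root (Jensen) to get the same bound.

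For the second step, I bound $\sum_v\widetilde{x}_v$ by $|M^*|$. This follows because $(\widetilde{x},\widetilde{w})$ is an optimal solution of the LP relaxation, so its objective value is at most that of any feasible integral solution. The integral solution I use is built from $M^*$. Set $x_v=1$ if and only if $v\in M^*$. Since $M^*$ is a monitoring edge-geodetic set, every edge $e$ has at least one pair $\{a,b\}\in\binom{M^*}{2}\cap\mathcal{M}_V(e)$. Set $w^e_{ab}=1$ for exactly one such pair and $w^e$ to zero elsewhere.

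I then check that this assignment is feasible. The first constraint holds by construction, since each edge receives exactly one witness. For the second constraint, fix $e$ and $v$: the left-hand sum is at most $1$, and it is nonzero only when $v$ lies in the chosen witness pair. In that case $v\in M^*$, so $x_v=1$ and the constraint holds. Hence $\sum_v\widetilde{x}_v\le|M^*|$, and combining with the first step gives $\E[|M(\widetilde{x})|]\le\sqrt{t}\cdot\sqrt{n|M^*|}=\sqrt{nt|M^*|}$.

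None of the steps is a real obstacle. The only point needing care is verifying that $M^*$ induces a feasible ILP solution, so that the LP optimum is at most $|M^*|$.
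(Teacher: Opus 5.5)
Your proof is correct and follows the paper's argument: drop the minimum in $p(\widetilde{x}_v)$, apply Cauchy--Schwarz to get $\sum_{v} \sqrt{t\widetilde{x}_v} \le \sqrt{nt\sum_{v}\widetilde{x}_v}$, and bound the LP value by $|M^*|$. Your explicit check that $M^*$, with one witness pair chosen per edge, gives a feasible integral solution is a step the paper leaves implicit, and you verify it correctly.
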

\begin{proof}
The Cauchy-Schwarz inequality shows that, for non-negative $z_1,\dots,z_n$, we have $
  \sum_{i=1}^n\sqrt{z_i}
  \le\sqrt{n\sum_{i=1}^n z_i}$. Together with $|M^*| \ge \sum_{v\in V}\widetilde{x}_v$, this implies:
\[
  \E[|M(\widetilde{x})|]
  =\sum_{v\in V}p(\widetilde{x}_v)
  =\sum_{v\in V}\min\{1,\sqrt{t\widetilde{x}_v}\}
  \le\sum_{v\in V}\sqrt{t\widetilde{x}_v}
  \le\sqrt{nt\sum_{v\in V}\widetilde{x}_v}
  \le\sqrt{nt|M^*|}. \tag*{\qedhere}
\]
\end{proof}

\Cref{cor:apx_success_probability} and \Cref{lemma:apx_expected_solution_size}, together with Markov's inequality and the union bound already provide a randomized approximation-algorithm for \megset.
In detail, the probability that $|M(\widetilde{x})|$ exceeds $2\sum_{v\in V}p(\widetilde{x}_v)$ is at most $\frac{1}{2}$,
hence the algorithm has a probability of at least $1 - \frac{1}{4} - \frac{1}{2} = \frac{1}{4}$ of returning a monitoring edge-geodetic set of size at most $2\sum_{v\in V}p(\widetilde{x}_v) \le 2 \sqrt{n t |M^*|} = O(\sqrt{n |M^*| \log n})$.

Moreover, since solutions can be checked in polynomial time, and we can test whether $|M(\widetilde{x})| \le 2\sum_{v\in V}p(\widetilde{x}_v)$ holds, we can repeatedly sample $M(\widetilde{x})$ until we encounter a feasible solution that is not too large. This yields a Las Vegas algorithm that performs at most $4$ trials in expectation.

\subparagraph*{Derandomization.}
We now derandomize the above algorithm. Our deterministic version of the algorithm solves the relaxed ILP to obtain an optimal solution $(\widetilde{x}, \widetilde{w})$ and converts it to an optimal solution $(\widetilde{x}, \overline{w})$ such that all $G_e^+(\overline{w})$ are forests, as shown by \Cref{lem:megset-lp-forest-support}.
This allows to find a collection $\mathscr{S}_e$ of stars of $G_e^+(\overline{w})$ satisfying the conditions of \Cref{lem:megset-heavy-star-forest}, for each $e \in E$.

Then, instead of choosing $M(\widetilde{x})$ by independently selecting each vertex $v$ with probability $p(\widetilde{x}_v)$, it examines one vertex at a time and deterministically decides whether such a vertex will be included in the solution using the method of conditional expectations.
Here excluding (resp.\ including) a vertex from (resp.\ into) the solution $v$ can be seen as forcing the corresponding variable $\widetilde{x}_v$ to $0$ (resp.\ $1$) which forces $p(\widetilde{x}_v)$ to become $0$ (resp.\ $1$).

In the rest of this section we provide the details of this iterative selection step.

Fix an ordering $V=\{v_1,\dots,v_n\}$ and write $x_j=x_{v_j}$.
For a vector $x\in\{0,1\}^n$, let $U_e(x)$ be the indicator function that
equals $1$ if no star in $\mathscr{S}_e$ covers $e$ w.r.t.\ $\{v_j \mid x_j = 1\}$.
Define $\nu = \sum_{i=1}^n p(\widetilde{x}_{v_i})$ and
$\varphi(x)=\sum_{e\in E}U_e(x)
      +\frac{3}{4\nu}\sum_{j=1}^n x_j$.
Let $X=(X_1,\dots,X_n)$ be a vector of independent Bernoulli random variables
where, $X_j$ has parameter $p(\widetilde{x}_j)$.
Using the definition of $f_e$, \Cref{lemma:apx_ub_f_e}, \Cref{lemma:apx_expected_solution_size}, and the linearity of expectation:
\[
  \E[\varphi(X)]
  =\sum_{e\in E}\E[U_e(X)]
      +\frac{1}{4\nu}\sum_{j=1}^n\E[X_j]
  \le \sum_{e\in E}f_e(\widetilde{x})+\frac{3}{4}
  < m\cdot \frac{1}{4m} + \frac{3}{4}= 1.
\]

Consider some $i=0, \dots, n$ and some choice of $x_1, \dots, x_i \in \{0, 1\}$
with the constraint that $x_j = p(\widetilde{x}_j)$ whenever $p(\widetilde{x}_j) \in \{0, 1\}$.
Denote by $x^{(i)}$ the vector obtained from $\widetilde{x}$ by replacing $\widetilde{x}_j$ with $x_j$ for $j=1,\dots, i$.
Let $H_i$ denote the event $\{X_1=x_1,\dots,X_i=x_i\}$, with $H_0$ being the certain event.
Then, the following conditional expectation can be evaluated in polynomial time:
\[
  \E[\varphi(X)\mid H_i]
  =\sum_{e\in E}f_e(x^{(i)})
      +\frac{3}{4\nu}\sum_{j=1}^n p(x^{(i)}_j),
\]

Our deterministic algorithm works iteratively for $i=0, \dots, n-1$. When $i$ is considered, it permanently fixes variable $x_{i+1}$ to a value in $\{0, 1\}$ such that
  $\E[\varphi(X)\mid H_i,X_{i+1}=x_{i+1}]
  \le\E[\varphi(X)\mid H_i]$. 
In detail, if $p(\widetilde{x}_{i+1}) \in \{0, 1\}$, we simply choose  $x_{i+1} = p(\widetilde{x}_{i+1})$.
Otherwise, the law of total expectation gives:
\[  \E[\varphi(X)\mid H_i]
  =(1-p(\widetilde{x}_{i+1})) \cdot
       \E[\varphi(X)\mid H_i,X_{i+1}=0]
     +p(\widetilde{x}_{i+1}) \cdot 
       \E[\varphi(X)\mid H_i,X_{i+1}=1].
\]
Hence, at least one of the two conditional expectations on the
right-hand side is at most $\E[\varphi(X)\mid H_i]$.
We evaluate both and choose the value of $x_{i+1}$ that yields the
smaller conditional expectation, breaking ties arbitrarily.

Once all variables are fixed, we obtain a vector $x\in\{0,1\}^n$
satisfying $\varphi(x)=\E[\varphi(X)\mid X=x] \le\E[\varphi(X)] < 1$.
Since every term of $\varphi(x)$ is nonnegative and each $U_e(x)$
belongs to $\{0,1\}$, this implies $U_e(x)=f_e(x)=0$ for every
$e\in E$ and $\sum_{j=1}^n x_j \le \frac{4}{3} \nu \le \frac{4}{3} \sqrt{nt |M^*|}$, as shown by \Cref{lemma:apx_expected_solution_size}.
Thus, the set $M(x)=\{v_j:x_j=1\}$ has size at most $\frac{4}{3} \sqrt{nt |M^*|} =\frac{8}{3} \sqrt{n |M^*|  \ln (4m) }$ and each edge $e \in E$ is covered by some star w.r.t.\ $M(x)$, i.e., $M(x)$ is a monitoring edge-geodetic set of $G$.
We have therefore shown the following:
\begin{lemma}
\label{lemma:apx_derandomized}
There exists a deterministic polynomial-time algorithm for \megset that returns a solution of size at most $\frac{8}{3} \sqrt{n |M^*|  \ln (4m)}$, where $M^*$ is an optimal solution.
\end{lemma}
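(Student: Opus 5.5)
The lemma records the outcome of the derandomized procedure just described, so the plan is to check its three ingredients in order: polynomial running time, feasibility of the output, and the size bound.

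\emph{Running time.} The relaxed ILP has $n$ vertex variables and at most $m\binom{n}{2}$ witness variables. Each set $\mathcal{M}_V(e)$ can be computed in polynomial time by counting shortest paths: $e=(u,v)$ lies on all shortest $a$--$b$ paths iff
\[
d(a,u)+1+d(v,b)=d(a,b)\quad\text{and}\quad \sigma(a,u)\,\sigma(v,b)=\sigma(a,b),
\]
or the symmetric condition with $u$ and $v$ exchanged, where $\sigma$ counts shortest paths and is obtained by BFS. Hence the LP has polynomial size and is solved in polynomial time. \Cref{lem:megset-lp-forest-support} then makes every $G_e^+(\overline{w})$ a forest, after polynomially many cycle cancellations. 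The proof of \Cref{lem:megset-heavy-star-forest} is constructive (a rooting and a parity split), so each collection $\mathscr{S}_e$ is found efficiently. Finally, every conditional expectation $\E[\varphi(X)\mid H_i]$ is a sum of $m$ products $f_e(x^{(i)})$ plus a linear term, so it is evaluable in polynomial time, and there are only $n$ rounds with two evaluations each.

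\emph{Correctness of the potential.} The core inequality is $\E[\varphi(X)]<1$.
\begin{itemize}
\item For each edge, $\E[U_e(X)]\le f_e(\widetilde{x})<\frac{1}{4m}$ by the coverage-probability lemma and \Cref{lemma:apx_ub_f_e}.
\item The size term has expectation exactly $\frac{3}{4\nu}\cdot\nu=\frac34$. The displayed computation should use the coefficient $\frac{3}{4\nu}$ consistently; the $\frac{1}{4\nu}$ appearing there is a typo.
\end{itemize}
Summing gives $\E[\varphi(X)]<m\cdot\frac{1}{4m}+\frac34=1$.

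Next I would verify that the conditional expectation formula holds. Conditioning on $H_i$ amounts to replacing $p(\widetilde{x}_j)$ by $x_j\in\{0,1\}$ for $j\le i$. This works because $p(0)=0$ and $p(1)=\min\{\sqrt{t},1\}=1$, using $t\ge1$. Independence of the $X_j$ then gives the formula, since $f_e$ is exactly the non-coverage probability under independent sampling. The law of total expectation guarantees that some choice of $x_{i+1}$ does not increase the conditional expectation, so by induction the final $x$ satisfies $\varphi(x)<1$.

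\emph{Deducing the claim.} Every $U_e(x)$ lies in $\{0,1\}$ and the size term is nonnegative, so $\varphi(x)<1$ forces all $U_e(x)=0$. Thus every edge $e$ is covered by a star edge $(c_i,v)$ with both endpoints selected. Such an edge is an edge of $G_e$, so $\{c_i,v\}\in\mathcal{M}_V(e)$ monitors $e$, and the output is a monitoring edge-geodetic set. Moreover $\frac{3}{4\nu}\sum_j x_j<1$ gives $\sum_j x_j<\frac43\nu$. Applying \Cref{lemma:apx_expected_solution_size} with $t=4\ln(4m)$ yields
\[
\sum_j x_j\;\le\;\frac43\sqrt{nt|M^*|}\;=\;\frac83\sqrt{n|M^*|\ln(4m)}.
\]

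The only step needing genuine care is confirming that $f_e$ remains the exact conditional non-coverage probability when some variables are fixed. This holds because the stars in $\mathscr{S}_e$ are vertex-disjoint and the vertex choices are independent. The rest is bookkeeping.
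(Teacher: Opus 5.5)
Your proposal is correct and follows the paper's argument essentially step for step. Both solve the relaxed LP, clean up its support into forests, extract the star collections, and run the method of conditional expectations on the potential $\varphi$ with $\E[\varphi(X)]<1$, which yields $\sum_j x_j<\frac43\nu\le\frac83\sqrt{n|M^*|\ln(4m)}$. You are also right that the coefficient $\frac{1}{4\nu}$ in the paper's display is a typo for $\frac{3}{4\nu}$. Your extra checks ($p(1)=1$ because $t\ge1$, and polynomial-time computation of $\mathcal{M}_V(e)$) fill in details the paper leaves implicit.
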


\subparagraph*{The final approximation algorithm.}
By returning the smallest solution between the one provided by $O\left(\sqrt{\frac{n \log n}{|M^*|}} \right)$-approximation algorithm of \Cref{lemma:apx_derandomized}, and the one resulting from the known $O(|M^*| \log n)$-approximation algorithm based on set-cover \cite{Colli2023,BiloCF024,FoucaudMSST26}, we obtain the main result of this section:
\begin{theorem}
There exists a deterministic polynomial-time $O(n^{1/3} \log^{2/3} n)$-approximation algorithm for \megset.
\end{theorem}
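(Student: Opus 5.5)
The plan is to run both deterministic algorithms and return the smaller of the two monitoring edge-geodetic sets; the approximation ratio then follows by balancing the two guarantees against $k = |M^*|$. The first algorithm is the one of \Cref{lemma:apx_derandomized}. It returns a feasible solution of size at most $\frac{8}{3}\sqrt{nk\ln(4m)}$. Since $m \le n^2$, we have $\ln(4m) = O(\log n)$, so this size is $O(\sqrt{nk\log n})$ and its ratio is $O(\sqrt{n\log n / k})$. The second algorithm is the known set-cover-based one \cite{Colli2023,BiloCF024,FoucaudMSST26}. It returns in polynomial time a feasible solution of size $O(k^2\log n)$, so its ratio is $O(k\log n)$. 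Both outputs are monitoring edge-geodetic sets and both algorithms are deterministic and polynomial-time. Hence the combined algorithm is also deterministic and polynomial-time, and its ratio is
\[
O\left(\min\left\{ k\log n,\ \sqrt{\tfrac{n\log n}{k}}\right\}\right).
\]

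The remaining step is to bound this minimum independently of $k$. I would split into two cases at the threshold $k_0 = n^{1/3}\log^{-1/3} n$. This is the value where $k\log n = \sqrt{n\log n/k}$, that is, where $k^3 = n/\log n$. If $k \le k_0$, the first term gives a ratio of at most $k_0\log n = n^{1/3}\log^{2/3} n$. If $k > k_0$, the second term gives a ratio of at most $\sqrt{n\log n/k_0} = \sqrt{n^{2/3}\log^{4/3} n} = n^{1/3}\log^{2/3} n$.

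In either case the ratio is $O(n^{1/3}\log^{2/3} n)$, which proves the theorem. I do not expect a genuine obstacle here, because the hard work is already done in \Cref{lemma:apx_derandomized}. The only points that need care are bounding $\ln(4m)$ by $O(\log n)$, and confirming that the set-cover result gives size $O(k^2\log n)$ rather than only the weaker $O(\sqrt{n\log n})$ ratio.
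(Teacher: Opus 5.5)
Your proposal is correct and matches the paper's proof: you run the algorithm of \Cref{lemma:apx_derandomized} and the set-cover-based $O(|M^*|\log n)$-approximation, return the smaller solution, and balance the two ratios at $|M^*| = (n/\log n)^{1/3}$. Your explicit case split at $k_0$ is simply a spelled-out form of the paper's observation that the minimum of an increasing and a decreasing function is maximized where the two cross.
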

\begin{proof}
  Fix some intger $n \ge 2$ and, for $x \in \mathbb{R}^+$, let $f(x) =  \min\left\{ \sqrt{\frac{n \log n}{x}}, x \log n \right\}$.
  The approximation ratio achieved by the above algorithm is at most $c \cdot f(|M^*|)$, for some absolute constant $c>0$.
  Since one term of the minimum is continuous and increasing in $x$, while the other term is continuous and decreasing in $x$, the choice $x^*$ of $x$ that maximizes $f$ is $x^* = ( \frac{n}{\log n} )^{\frac{1}{3}}$, as it satisfies $\sqrt{\frac{n \log n}{x^*}} = x^* \log n$. Therefore, $f(x) \le f(x^*) = n^{1/3} \log^{2/3} n$.
\end{proof}

\section{Inapproximability of \megset}
\label{sec:inapproximability}
In this section, we give a reduction
from \symmlabelcover showing that, for every constant
$\varepsilon \in (0,1)$, \megset admits no polynomial-time
$\left(2^{\log^{1-\varepsilon} n}\right)$-approximation algorithm, unless $\classNP\subseteq \classDTIME(n^{\polylog n})$.

An instance $\mathcal{I}$ of \symmlabelcover consists
of a complete bipartite graph
$\mathcal{H}=(U\cup W,E)$ with $|U|=|W|=q\ge1$,
two finite label sets $A$ and $B$, and a nonempty
relation $R_{u,w}\subseteq A\times B$ for every
$(u,w)\in E$.
A feasible solution assigns a set $\phi(u)\subseteq A$
to every $u\in U$ and a set $\psi(w)\subseteq B$ to
every $w\in W$. For every $e=(u,w)$ in $E$, there
must exist $a\in\phi(u)$ and $b\in\psi(w)$ such that
$(a,b)\in R_{u,w}$. The goal is to minimize
$\cost(\phi,\psi)=
\sum_{u\in U}|\phi(u)|+\sum_{w\in W}|\psi(w)|$.
For $s=\max\{|A|,|B|,q\}$, it is known that \cite{BlueRed}:

\begin{theorem}[\cite{BlueRed}]
\label{thm:symmetric-label-cover-hardness}
For every constant $\delta \in (0,1)$, \symmlabelcover
admits no polynomial-time
$\left(2^{\log^{1-\delta} s}\right)$-approximation algorithm,
unless
$\classNP\subseteq
\mathrm{DTIME}(n^{\polylog n})$.
\end{theorem}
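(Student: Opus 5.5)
Since this theorem is imported from \cite{BlueRed}, the plan is to re-derive it in the standard way: build a gap version of \symmlabelcover from a projection label cover produced by parallel repetition, and then pad it into the complete bipartite form used here. The steps below are a reconstruction of that argument, not a reproduction of the proof in \cite{BlueRed}.

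\textbf{Step 1: the label-cover instance.} Start from the PCP theorem in the form of gap-3SAT-5. Satisfiable instances and instances where at most a $(1-\eta)$ fraction of clauses can be satisfied cannot be distinguished in polynomial time unless $\classP=\classNP$. Build the clause--variable projection game on it and apply Raz's parallel repetition with $\ell$ rounds. This gives a regular bipartite graph $\mathcal{H}_0=(U_0\cup W_0,E_0)$ with projection constraints, $|U_0|,|W_0|\le n^{O(\ell)}$ and label sets of size $2^{O(\ell)}$. If the formula is satisfiable, some single-label assignment satisfies every edge. Otherwise, every single-label assignment satisfies at most a $2^{-c\ell}$ fraction of $E_0$, for an absolute constant $c>0$.

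\textbf{Step 2: pad to a complete bipartite graph.} First make $|U_0|=|W_0|=q$ by taking $|W_0|$ disjoint copies of $U_0$ and $|U_0|$ copies of $W_0$; this preserves regularity and both the completeness and soundness fractions. Then add every missing pair $(u,w)$ with the trivial relation $R_{u,w}=A\times B$. A trivial constraint is satisfied as soon as both endpoints receive at least one label. Since every vertex of $U\cup W$ already needs a label to satisfy its original constraints, the padding changes neither the optimum nor any near-feasible structure. The resulting instance has $s=\max\{|A|,|B|,q\}=n^{O(\ell)}$.

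\textbf{Step 3: completeness and soundness of the cost.} In the yes case one label per vertex is feasible, so the cost is $2q$. In the no case, suppose $(\phi,\psi)$ is feasible with cost at most $g\cdot 2q$. By Markov's inequality, at least half of the vertices on each side have at most $4g$ labels. By regularity of $\mathcal{H}_0$, at least a constant fraction of the original edges have both endpoints \emph{light} in this sense; only original edges are counted, since padded edges carry no information. Pick one label uniformly at random from each $\phi(u)$ and each $\psi(w)$. Every light original edge is then satisfied with probability at least $1/(16g^2)$. Hence some single-label assignment satisfies an $\Omega(1/g^2)$ fraction of $E_0$. Soundness of Step 1 then forces $g\ge 2^{\Omega(\ell)}$.

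\textbf{Step 4: choice of parameters.} The gap is $2^{\Omega(\ell)}$ and $\log s=\Theta(\ell\log n)$. Choose $\ell=\log^{\kappa}n$ with $\kappa$ a large constant depending on $\delta$. Then $\log s=\Theta(\log^{\kappa+1}n)$, so $\log^{1-\delta}s=\Theta(\log^{(\kappa+1)(1-\delta)}n)$. This is $o(\ell)$ as soon as $(\kappa+1)(1-\delta)<\kappa$, which holds for $\kappa>(1-\delta)/\delta$. The reduction runs in time $n^{O(\ell)}=n^{\polylog n}$. Therefore a polynomial-time $2^{\log^{1-\delta}s}$-approximation would decide 3SAT in quasi-polynomial time, i.e.\ $\classNP\subseteq\classDTIME(n^{\polylog n})$.

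\textbf{Main obstacle.} The delicate point is Step 3. The set-valued labelling must be decoded into a single assignment while staying robust to two effects: the padding edges and the unbalanced distribution of labels across vertices. I would handle both by restricting attention to light vertices and to original edges only, and by relying on the regularity of $\mathcal{H}_0$ to guarantee that enough light--light original edges survive.
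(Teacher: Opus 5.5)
The paper gives no proof of this statement; it imports it from \cite{BlueRed}. Your reconstruction follows the standard route: parallel repetition gives a projection game with soundness $2^{-c\ell}$, you blow it up and pad it with trivial constraints into complete bipartite form, you turn the satisfaction gap into a cost gap by Markov plus random decoding, and you pick a quasi-polynomial $\ell$. Steps 1, 2 and 4 are fine. In Step 2, make the blow-up explicit: $(u,i)\sim(w,j)$ iff $(u,w)\in E_0$. Soundness is then preserved by averaging over $(i,j)$. Only the upper bound $\log s=O(\ell\log n)$ is needed in Step 4.

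There is one inference in Step 3 that does not hold as written. From ``at least half of the vertices on each side have at most $4g$ labels'' together with regularity, you cannot conclude that a constant fraction of the original edges are light--light. For a counterexample, let $\mathcal{H}_0$ be the disjoint union of two regular pieces on $U_1\cup W_1$ and $U_2\cup W_2$, and let the light sets be $U_1$ and $W_2$. Each side is half light, yet no edge is light--light. A per-side Markov bound at threshold $4g$ therefore gives nothing.

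What rescues the step is that the budget $2gq$ is shared by the two sides. Let $\alpha$ and $\beta$ be the fractions of vertices of $U$ and $W$ carrying more than $4g$ labels. Then $4gq(\alpha+\beta)\le 2gq$, so $\alpha+\beta\le 1/2$. By bi-regularity, the edges incident to heavy $U$-vertices (resp.\ heavy $W$-vertices) form exactly an $\alpha$ (resp.\ $\beta$) fraction of the original edges. Hence at least half of the original edges are light--light. Alternatively, use threshold $8g$ on each side.

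With this fix, your random decoding satisfies each light--light original edge with probability at least $1/(16g^2)$. This yields a single-label assignment satisfying at least a $1/(32g^2)$ fraction of the original edges, so $g\ge 2^{\Omega(\ell)}$, and the rest of your argument goes through.
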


Starting from $\mathcal{I}$, we first construct a
graph $H$ and then combine several copies of $H$
to obtain an instance $G$ of \megset.

The graph $H$ is built as follows. 
For each $e=(u,w)\in E$, we create
vertices $x_e$ and $y_e$ and the edge
$t_e=(x_e,y_e)$. Let $T$ contain the edges $t_e$
for all $e\in E$. The edge $t_e$ will be monitored if
at least one pair of labels assigned to
$u$ and $w$ is in $R_{u,w}$.

For every $u\in U$ and every $a\in A$, we introduce
a vertex $L_{u,a}$ representing the assignment of
label $a$ to $u$. Similarly, for every $w\in W$
and every $b\in B$, we introduce a vertex $R_{w,b}$
representing the assignment of label $b$ to $w$.

For each $e=(u,w)\in E$, we add a path
$P(L_{u,a},x_e)$ of length $5$ for every $a\in A$
and a path $P(y_e,R_{w,b})$ of length $5$ for every
$b\in B$. For every pair $(a,b)\in A\times B$
that does not belong to $R_{u,w}$, we also add a
path $P(L_{u,a},R_{w,b})$ of length $5$.
This provides a shorter connection between vertices representing
labels assigned to $u$ and $v$ that are not in $R_{u,w}$,
preventing such pairs from monitoring $t_e$.
\Cref{fig:label-cover-core} show three constraints
of an instance $\mathcal{I}$ and the corresponding part of $H$.

\begin{figure}
  \centering
  \includegraphics[width=0.90\textwidth]{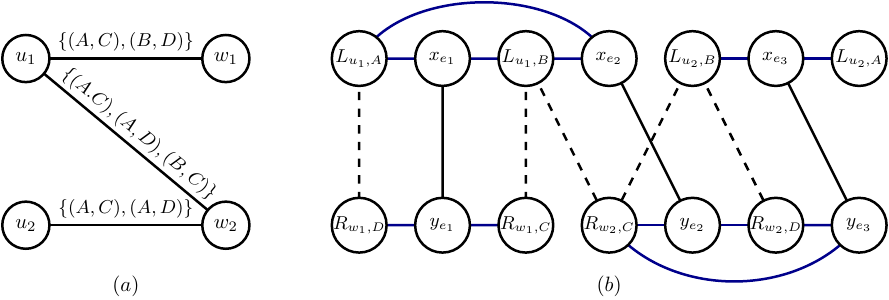}
  \caption{
Part of the construction of $H$ from an
instance $\mathcal{I}$.
(a) The instance has $U=\{u_1,u_2\}$ and
$W=\{w_1,w_2\}$, with possible labels $\{A,B\}$
for vertices in $U$ and $\{C,D\}$ for vertices in $W$.
The three displayed edges are $e_1=(u_1,w_1)$,
$e_2=(u_1,w_2)$, and $e_3=(u_2,w_2)$, each
annotated with its allowed label pairs.
The edge $(u_2,w_1)$ is omitted for clarity.
(b) The corresponding part of $H$.
Solid black lines represent the edges
$t_{e_1}$, $t_{e_2}$, and $t_{e_3}$.
Blue lines represent paths of length five joining
vertices representing labels to the corresponding
vertices $x_e$ and $y_e$.
Dashed black lines represent paths of length five
joining vertices representing label pairs $(a,b)\notin R_{u,w}$.}
  \label{fig:label-cover-core}
\end{figure}

We now construct $G$. Let $h=|V(H)|$, set $r=4h$,
and write $[r]$ for the set of integers from one to $r$.
Add $k=r^2$ vertex-disjoint copies of $H$ to $G$,
arranged in $r$ rows and $r$ columns.
For every $i,j \in [r]$, we denote the corresponding
copy of $H$ by $H^{i,j}$. For each vertex $v$ of $H$,
its copy in $H^{i,j}$ is denoted by $v^{i,j}$.
Likewise, the copy of a path $P(a,b)$ is denoted
by $P(a^{i,j},b^{i,j})$.
Let $t_e^{i,j}$ denote the copy of $t_e$ in
$H^{i,j}$, and let $T_G$ contain these edges for
all $e\in E$ and $i,j \in[r]$, i.e. $T_G = \{t_e^{i,j} :i,j\in [r], e \in E\}$.

Let $V_0$ consist of the vertices $x_e$ and $y_e$
for every $e\in E$, the vertices $L_{u,a}$ for
every $u\in U$ and $a\in A$, and the vertices
$R_{w,b}$ for every $w\in W$ and $b\in B$.
Thus, $V_0$ contains the endpoints of the paths
added to $H$, but excludes their internal
vertices.

To complete our construction of $G$, we consider
every $v\in V_0$ and every $i\in[r]$, and we
add a vertex $\rho_{v,i}$ and add a path
$P(\rho_{v,i},v^{i,j})$ of length $3$ for every
$j\in[r]$. Similarly, for every $v\in V_0$ and
every $j\in[r]$, we add a vertex
$\kappa_{v,j}$ and add a path
$P(\kappa_{v,j},v^{i,j})$ of length $3$ for
every $i\in[r]$. These vertices connect the
copies of $v$ along each row and each column.

For each $v\in V_0$, let $\Gamma(v)$ contain
all vertices $\rho_{v,i}$ and $\kappa_{v,j}$,
as $i$ and $j$ range over $[r]$.
Let $\Gamma$ be the union of these sets over
all $v\in V_0$. For every $g\in\Gamma$, we append
a leaf $\ell(g)$ to $g$, and let $F$ be the set
of these leaves.

Let $D$ contain all pairs $\{x,y\}$
of distinct vertices of $V_0$ for which some
edge of $T$ is moitored by $\{x,y\}$.
For every $\{x,y\}\in D$, every $p\in\Gamma(x)$,
and every $q\in\Gamma(y)$, we add a path $P(p,q)$
of length $6$. These paths serve to prevent
pairs of leaves in $F$ from monitoring the
edges in $T_G$.
This completes the construction of $G$.
\Cref{fig:label-cover-full} shows the
connections between the copies of $H$, the
vertices of $\Gamma$, and their leaves.

\begin{figure}
  \centering
  \includegraphics[width=0.5\textwidth]{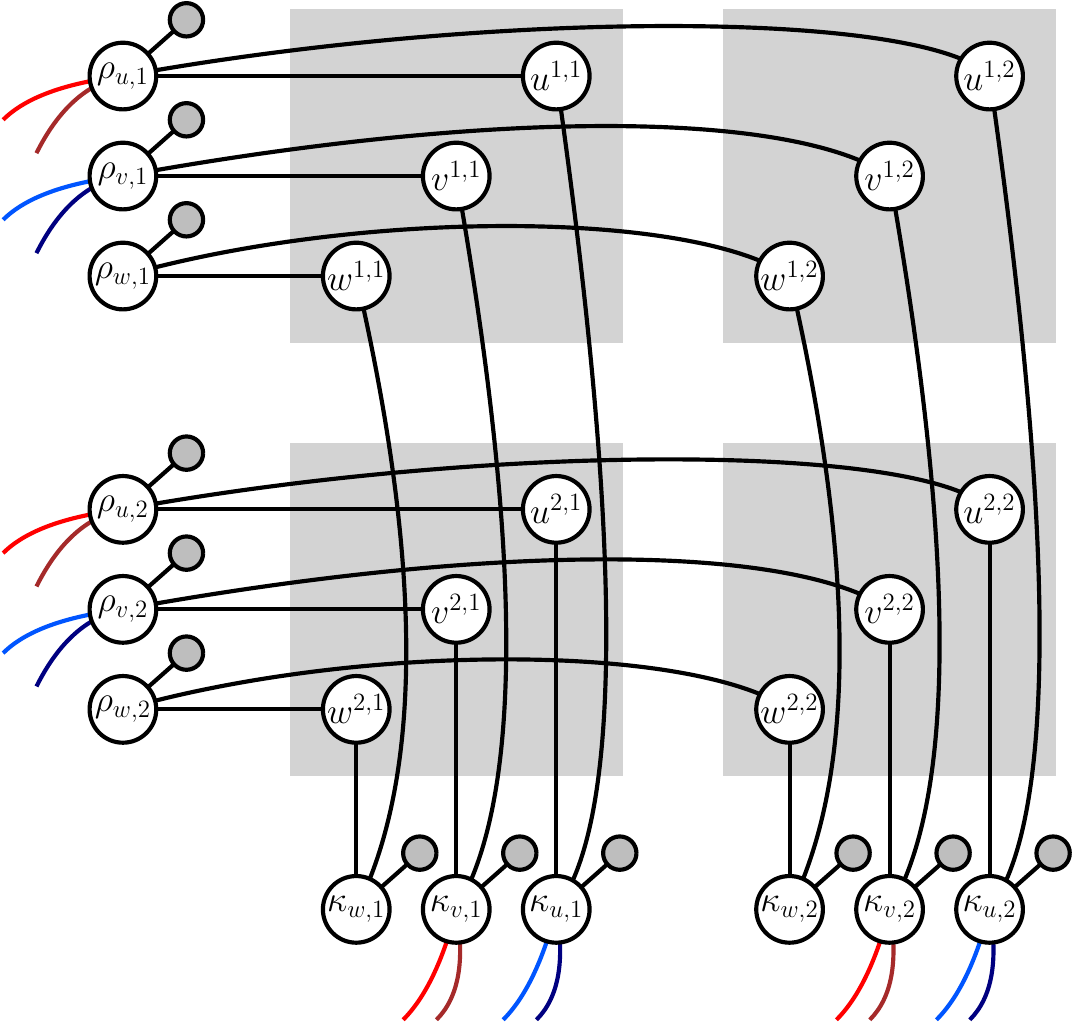}
  \caption{
    Qualitative depiction of the graph $G$.
    The shaded squares represent the copies
    $H^{1,1}$, $H^{1,2}$, $H^{2,1}$, and $H^{2,2}$,
    showing only the copies of three vertices
    $u,v,w\in V_0$.
    For each $z\in\{u,v,w\}$, the black lines joining
    $\rho_{z,i}$ to $z^{i,j}$ and $\kappa_{z,j}$
    to $z^{i,j}$ represent paths of length three.
    The small grey vertices are the leaves
    $\ell(g)\in F$, each joined to its corresponding
    vertex $g\in\Gamma$ by a single edge.
    Assuming $\{u,v\}\in D$, the red and brown curves
    indicate portions of the paths of length six
    between $\rho_{u,i}$ and $\kappa_{v,j}$;
    the blue curves indicate portions of those
    between $\rho_{v,i}$ and $\kappa_{u,j}$.
    Only the ends of these coloured paths are shown.
    The remaining copies, the edges inside each
    copy, the internal vertices of the paths, and
    the other paths between vertices of $\Gamma$
    are omitted for clarity.
  }
  \label{fig:label-cover-full}
\end{figure}

The next two lemmas establish the correctness of the
reduction. \Cref{lem:all-edges-monitored-leaves} shows
that the leaves in $F$ monitor every edge outside $T_G$,
while \cref{lem:normalize-meg-set} shows that any
monitoring edge-geodetic set can be enlarged by a
constant factor to include a compatible pair of labels
for every edge in $T_G$.
Together, they allow us to transfer approximation
guarantees from \megset to \symmlabelcover (up to constant factors).

\begin{lemma}\label{lem:all-edges-monitored-leaves}
Every edge in $E(G) \setminus T_G$ is monitored
by a pair of vertices in $F$.
\end{lemma}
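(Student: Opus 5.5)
Every edge not in $T_G$ lies on one of the gadget paths of $G$: the leaf edges $(g,\ell(g))$, the length-$3$ paths $P(\rho_{v,i},v^{i,j})$ and $P(\kappa_{v,j},v^{i,j})$, the length-$6$ paths $P(p,q)$, and the copies of the length-$5$ paths of $H$. Every internal vertex of these paths has degree $2$. So it suffices, for each such path $P$, to find two leaves $\ell(g),\ell(g')\in F$ such that every shortest $\ell(g)$--$\ell(g')$ path contains all of $P$. Since $\ell(g)$ has unique neighbour $g$, it is enough to show that $P$ extends to a path $g\ldots g'$ which is the \emph{unique} shortest path between $g$ and $g'$. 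I will argue this with a distance count: first show that every path in $G$ between two vertices of $\Gamma$ has length at least $3$, and that every path from $\Gamma$ to $V_0$-copies uses a length-$3$ connector. Then compute, case by case, the length of the candidate path and show that every alternative route is strictly longer.

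\emph{Leaf edges and length-$6$ paths.} A leaf edge $(g,\ell(g))$ lies on the path to any other leaf, so it suffices to give one monitored pair per leaf edge. This comes for free from the other cases, since every $g\in\Gamma$ is an endpoint of some length-$3$ connector. For a length-$6$ path $P(p,q)$ with $p\in\Gamma(x)$, $q\in\Gamma(y)$ and $\{x,y\}\in D$, I take the pair $\ell(p),\ell(q)$ and show $d_G(p,q)=6$ with $P(p,q)$ the unique shortest path. Any other route leaves $p$ through a length-$3$ connector into some copy $H^{i,j}$. It must then either reach $q$ through another length-$3$ connector, plus at least one more edge since $x\ne y$ and there is no edge of $H$ joining distinct vertices of $V_0$. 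Or it passes through other $\Gamma$ vertices, costing at least $6$ per hop. Either way the length is at least $3+1+3=7>6$, given that within $H$ any two distinct vertices of $V_0$ are at distance at least $1$. Any detour through other length-$6$ paths costs at least $12$.

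\emph{Length-$3$ connectors and copies of length-$5$ paths.} For a connector $P(\rho_{v,i},v^{i,j})$ I pair $\ell(\rho_{v,i})$ with $\ell(\kappa_{v,j})$. The path $\rho_{v,i}\to v^{i,j}\to\kappa_{v,j}$ has length $6$. Any other route must either pass through a different copy $v^{i,j'}$ or $v^{i',j}$, which needs a path of length at least $5$ inside $H$ or through other connectors, giving length at least $3+3+3>6$. Or it must use a length-$6$ path, which is impossible because $\{v,v\}\notin D$, so going via another $\Gamma$ vertex costs at least $12$. Hence the route is unique. For a copy of a length-$5$ path $P(a^{i,j},b^{i,j})$ of $H$, with endpoints $a,b\in V_0$, I pair $\ell(\rho_{a,i})$ with $\ell(\kappa_{b,j})$, giving a candidate path of length $3+5+3=11$. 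Competing routes are of three kinds:
\begin{itemize}
\item a different path inside $H^{i,j}$, of length at least $5$, with equality only for parallel length-$5$ paths between the same endpoints, which do not occur by construction;
\item a route through other copies, of length at least $3+3+3+3>11$ once the path leaves the copy via $\Gamma$;
\item a length-$6$ $\Gamma$–$\Gamma$ path, which is relevant only if $\{a,b\}\in D$.
\end{itemize}

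The last kind is the main obstacle: if $\{a,b\}\in D$, then $\rho_{a,i}$ and $\kappa_{b,j}$ are joined directly by a path of length $6<11$. So for such pairs I would instead choose the endpoints more cleverly. One option is to use leaves of the $\Gamma$-vertices of the far endpoints of longer $H$-paths through $P(a,b)$, e.g.\ $L_{u,a}\to x_e\to y_e$ extended, checking in $H$ which pairs are not in $D$. Another is to observe that $D$ consists only of pairs monitoring some $t_e$: such pairs are at distance $11$ in $H$, via $L\to x_e\to y_e\to R$, so adjacent $V_0$ vertices at distance $5$ are never in $D$. If that holds, the direct-path issue disappears, and the remaining work is the careful verification that $D$ never contains endpoint pairs of a single length-$5$ path.
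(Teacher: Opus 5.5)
Your plan matches the paper's proof: the witness pairs are the same ($\ell(\rho_{v,i}),\ell(\kappa_{v,j})$ for the length-$3$ connectors, $\ell(p),\ell(q)$ for the length-$6$ paths, $\ell(\rho_{s,i}),\ell(\kappa_{t,j})$ for a copy of a length-$5$ path $P(s,t)$), and so is the distance counting. There is, however, a gap in the last case, which is the heart of the lemma.

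\textbf{The gap.} Your route of length $11$ is a shortest path only if there is no direct length-$6$ path between $\rho_{s,i}$ and $\kappa_{t,j}$, i.e.\ only if $\{s,t\}\notin D$. You leave this unproved (``If that holds \dots\ the remaining work is \dots''). Moreover, the justification you sketch is false: pairs in $D$ are not all at distance $11$ in $H$. For instance, $\{x_e,y_e\}$ is at distance $1$, and $\{L_{u,a},y_e\}$ is at distance $6$ with unique shortest path $L_{u,a}\to x_e\to y_e$. Both monitor $t_e$ and so lie in $D$. Hence ``distance $5$ implies not in $D$'' does not follow from your premise.

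\textbf{The fix.} You already have the ingredients in your first bullet. $P(s,t)$ is a shortest $s$--$t$ path in $H$. Indeed, any other $s$--$t$ path in $H$ contains a full length-$5$ path plus at least one further edge, because:
\begin{itemize}
\item the edges of $T$ form a matching;
\item $x_e,y_e$ are never the two endpoints of a length-$5$ path;
\item no two length-$5$ paths share both endpoints.
\end{itemize}
Since $P(s,t)$ contains no edge of $T$, no edge of $T$ lies on all shortest $s$--$t$ paths in $H$, so $\{s,t\}\notin D$ by definition. This is exactly the one-line step in the paper.

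With it, every competing route from $\rho_{s,i}$ to $\kappa_{t,j}$ has length at least $12$. A route through another $\Gamma$-vertex has length at least $6+6$. A route confined to $H^{i,j}$ apart from the two connectors has length at least $3+6+3$. So the case closes, and your first ``option'' of re-choosing endpoints is unnecessary.

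\textbf{Minor points.} The preliminary bound you state between distinct $\Gamma$-vertices should be $6$, not $3$; you in fact use $6$ later. Your claim that no edge of $H$ joins two distinct vertices of $V_0$ is false ($t_e$ does), though your length-$6$ argument only needs $x^{i,j}\neq y^{i,j}$.
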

\begin{proof}
Every path between distinct vertices of $\Gamma$
has length at least $6$. Consequently, a path
between two vertices of $\Gamma$ that traverses
another vertex of $\Gamma$ has length at least
$12$.

Fix $v\in V_0$ and $i,j\in[r]$.
The concatenation of $P(\rho_{v,i},v^{i,j})$ and
$P(\kappa_{v,j},v^{i,j})$ forms a path
of length $6$ between $\rho_{v,i}$ and
$\kappa_{v,j}$, while our construction adds no path
between these two vertices of $\Gamma(v)$.
Moreover, $v^{i,j}$ is the only copy of $v$
joined to both by the added paths of length $3$.
Any alternative path that does not contain vertices in
$\Gamma$, must include at least two paths of length $3$  
and at least one edge inside a copy of $H$,
and therefore has length at least $7$.
Therefore, the leaves $\ell(\rho_{v,i})$ and $\ell(\kappa_{v,j})$
monitor every edge of the path obtained by
concatenating  $(\ell(\rho_{v,i}),\rho_{v,i})$,
with $P(\rho_{v,i},v^{i,j})$,
$P(v^{i,j},\kappa_{v,j})$, and 
$(\kappa_{v,j}, \ell(\kappa_{v,j}))$.

Next, fix $\{x,y\}\in D$, $p \in \Gamma(x)$,
and $q \in \Gamma(y)$.
The path $P(p,q)$ has length $6$.
Any alternative avoiding other vertices of
$\Gamma$ concatenates two paths of length three and
at least one edge inside a copy of $H$,
and therefore has length at least $7$.
An alternative shortest path that contains a vertex in 
$\Gamma$ has length at least $12$.
Thus, $P(p,q)$ is the unique shortest path, and
$\ell(p)$ and $\ell(q)$ monitor all its edges.
Together with the preceding case, this shows that
every edge of $G$ outside the copies of $H$ is
monitored by some pair of vertices in $F$.

It remains to consider the edges inside the
copies of $H$. Every such edge outside $T_G$
belongs to some copy of a
$P(s,t)$ of length $5$. The path $P(s,t)$ is the unique shortest path between its endvertices $s,t \in H^{i,j}$, since any alternative path containing vertices in $V(G) \setminus V(H^{i j})$ has length at least $6$, observing that it contains $2$ paths of length at least $3$ to exit and return into $H^{i,j}$.
Moreover, every alternative $s-t$ path that contains only vertices in $V(H^{i j})$ contains at least a path of length $5$ toghether with an edge.
As this path contains no edge of $T$, we have
$\{s,t\}\notin D$.

For every $i,j\in[r]$, the path from
$\rho_{s,i}$ to $\kappa_{t,j}$ that contains
$P(s^{i,j},t^{i,j})$ has length $11$.
Any alternative path including a vertex of
$\Gamma$ has length at least $12$.
Next, observe that no path was added directly between these
two vertices of $\Gamma$, since
$\{s,t\}\notin D$.
Finally, if alteranive shortest path not including other vertices of
$\Gamma$ existed, it would need to use the same two paths
of length $3$ and a path of length $5$ inside
$H^{i,j}$, however their concatenation is not a shortest path.

Thus, $\ell(\rho_{s,i})$ and
$\ell(\kappa_{t,j})$ monitor every edge of
$P(s^{i,j},t^{i,j})$.
This applies to all introduced paths in $H$,
including those connecting incompatible pairs,
and completes the proof.
\end{proof}

\begin{lemma}\label{lem:normalize-meg-set}
Given a monitoring edge-geodetic set $M$ of $G$,
one can compute in polynomial time a set
$M'\subseteq V(G)$ such that:
(i) $M'$ is a monitoring edge-geodetic set of $G$;
(ii) $|M'|\le5|M|$; and
(iii) for every $e=(u,w)\in E$ and every $i,j\in[r]$,
there exists $(a,b)\in R_{u,w}$ such that
$L_{u,a}^{i,j},R_{w,b}^{i,j}\in M'$.
\end{lemma}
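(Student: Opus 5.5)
We construct $M'$ by local replacement: every vertex of $M$ is swapped for at most five vertices, which gives (ii). After that we add the leaf set $F$, whose size must be charged to $|M|$. We then argue that the enlarged set covers $T_G$ via labels. The central structural claim concerns any pair $\{p,q\}$ monitoring an edge $t_e^{i,j}\in T_G$. We will show that, after projecting each of $p$ and $q$ to a vertex of $V_0$ in the appropriate copy, the pair $\{p,q\}$ yields a pair of label vertices $L_{u,a}^{i,j},R_{w,b}^{i,j}$ with $(a,b)\in R_{u,w}$.

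\textbf{Step 1: handling $F$.} Because $M$ is a monitoring edge-geodetic set, it must contain every leaf of $G$: a leaf's pendant edge is monitored only by pairs that include the leaf. Hence $F\subseteq M$, and adding $F$ costs nothing. By \Cref{lem:all-edges-monitored-leaves}, any superset of $F$ monitors every edge outside $T_G$, which gives (i) for those edges. It therefore suffices to ensure that each edge $t_e^{i,j}$ is monitored by a compatible pair of label vertices in copy $(i,j)$. Such a pair does monitor $t_e^{i,j}$, because paths $P(L_{u,a},x_e)$, $t_e$, $P(y_e,R_{w,b})$ of total length $11$ are the unique shortest route when $(a,b)\in R_{u,w}$. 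This uses the same length-counting as in \Cref{lem:all-edges-monitored-leaves}: any detour through another label's gadget or through $\Gamma$ costs at least $12$, and no direct length-$5$ path exists for compatible pairs.

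\textbf{Step 2: the map.} For each $z\in M\setminus F$ we choose a small set $\sigma(z)$ of label vertices as follows. If $z$ lies in a copy $H^{i,j}$, on an internal path vertex or an endpoint, we take the nearest $V_0$-endpoints and, for $x_e$ or $y_e$, a suitable label vertex. If $z$ is a $\Gamma$-vertex or lies on a connector path, we take its associated $v\in V_0$. We then set $M'=F\cup\bigcup_z\sigma(z)$. The difficulty is that a single vertex such as $\rho_{v,i}$ stands for $r$ copies. Making property (iii) hold for all $i,j$ would require adding $\Theta(r)$ vertices, which is too many. The intended argument is therefore a counting one. Since $r=4h$, each copy $H^{i,j}$ with no vertex of $M$ inside it must have its edges $t_e^{i,j}$ monitored by pairs that enter the copy from $\Gamma$ or its leaves. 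We will show that such pairs are far enough away that they cannot monitor $t_e^{i,j}$. Pairs in $D$ are short-circuited by the length-$6$ $\Gamma$–$\Gamma$ paths, and the corresponding leaves see a length-$8$ route that bypasses the copy. This forces $M$ to meet every copy, or every row and column in a way that forces at least $\Omega(r)$ vertices per row. Within each copy $H^{i,j}$, a pair $\{p,q\}\subseteq M\cap V(H^{i,j})$ (after rerouting) that monitors $t_e$ projects to label vertices $L_{u,a},R_{w,b}$. The pair must be compatible, since incompatible pairs are shortcut by $P(L_{u,a},R_{w,b})$. Each internal path vertex projects to at most two endpoints, and $x_e/y_e$ projects to at most one label together with possibly some neighbours. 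This is where the factor $5$ comes from.

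\textbf{Main obstacle.} The hard part is the case analysis showing that any monitoring pair for $t_e^{i,j}$ consists of vertices whose projections give a compatible label pair in the same copy. In particular, we must rule out pairs using $\Gamma$-vertices, leaves, or vertices in other copies. The first thing to try is a distance argument: the length-$6$ paths between $\Gamma(x)$ and $\Gamma(y)$ for $\{x,y\}\in D$ create a tie or a strictly shorter route whenever the pair reaches $t_e^{i,j}$ through a $\Gamma$-vertex. As a result, such pairs monitor no edge of $T_G$, and $M$ must contain, inside each copy or on its attaching paths, vertices that project onto compatible labels.
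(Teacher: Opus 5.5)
You have a genuine gap. Step~1 is sound, but the lemma's real content is the claim you defer as the ``main obstacle'', and the resolution you sketch is false. You expect the length-$6$ paths to rule out every monitoring pair of $t_e^{i,j}$ that uses a $\Gamma$-vertex or a leaf. This fails when the other endpoint is $x_e^{i,j}$ or $y_e^{i,j}$. The pair $\{x_e^{i,j},\ell(\kappa_{y_e,j})\}$ monitors $t_e^{i,j}$, because the route through $t_e^{i,j}$, then $P(y_e^{i,j},\kappa_{y_e,j})$, then the pendant edge has length $5$. Any route avoiding $t_e^{i,j}$ either detours inside $H^{i,j}$ (at least $15$ edges just to reach $y_e^{i,j}$) or passes through a second vertex of $\Gamma$ (at least $3+6+1$ edges). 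Together with the previous lemma, $F\cup\{x_e^{i,j}: e\in E,\ i,j\in[r]\}$ is therefore a monitoring edge-geodetic set with no label vertex at all.

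Consequently your map $\sigma$ must, in general (already for $q=1$), send $x_e^{i,j}$, $y_e^{i,j}$, and internal vertices of the connector paths ending there to a complete compatible pair $L^{i,j}_{u,a_e},R^{i,j}_{w,b_e}$, not to ``at most one label''. Since you discard $M$ and take $M'=F\cup\bigcup_z\sigma(z)$, property~(i) also rests entirely on this unproved~(iii). The row/column counting, by contrast, is a red herring. $\Gamma$-vertices and leaves never need to be projected, so $\rho_{v,i}$ touching $r$ copies is harmless; the $r\times r$ layout only matters for the averaging in the inapproximability theorem.

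The paper fills the gap as follows. It keeps $M\subseteq M'$, so (i) is free. It lets $S$ be $M$ plus both endpoints of every introduced path with an internal vertex in $M$, and adds $L^{i,j}_{u,a_e},R^{i,j}_{w,b_e}$ whenever $x_e^{i,j}$ or $y_e^{i,j}$ lies in $S$; this costs at most $4$ vertices per vertex of $M$.

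Suppose neither lies in $S$. Take a shortest path $Q$ between a pair of $M$ monitoring $t_e^{i,j}$; endpoints of any subpath of $Q$ through $t_e^{i,j}$ still monitor it. Let $P$ be the subpath between the first and last vertices of $Q$ lying in $\Gamma$ or in copies of $V_0$. The endpoints of $P$ inside copies are in $S$. The ends $p^{i,j},q^{i,j}$ of the maximal piece of $P$ inside $H^{i,j}$ through $t_e^{i,j}$ satisfy $\{p,q\}\in D$. There are three cases.
\begin{itemize}
\item If $P$ leaves $H^{i,j}$ on both sides, its part between the nearest $\Gamma$-vertices has length at least $7$, against a direct path of length $6$.
\item If it leaves on one side only, its part from $p^{i,j}$ to the exit vertex $g\in\Gamma(q)$ has length at least $6+3$. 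This is tied by $P(p^{i,j},\rho_{p,i})$ followed by $P(\rho_{p,i},g)$. The bound needs $p^{i,j}\notin\{x_e^{i,j},y_e^{i,j}\}$, which is exactly the exception your heuristic overlooks.
\item Otherwise $P$ lies in $H^{i,j}$, and the length-$12$ route through $\Gamma$ forces $|E(P)|\le 11$. Since its endpoints are not $x_e^{i,j},y_e^{i,j}$, $P$ must run from some $L^{i,j}_{u,a}$ to some $R^{i,j}_{w,b}$. The direct length-$5$ path $P(L_{u,a},R_{w,b})$, present whenever $(a,b)\notin R_{u,w}$, then forces compatibility.
\end{itemize}
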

\begin{proof}
We first construct $M'$.
Start with $M$ and add both endpoints of every
introduced path containing an internal vertex
of $M$. Call this set $S$.
For every $e=(u,w)$, fix one pair
$(a_e,b_e)\in R_{u,w}$ and for every $i,j\in[r]$,
if $x_e^{i,j}$ or $y_e^{i,j}$ belongs to $S$,
add $L_{u,a_e}^{i,j}$ and $R_{w,b_e}^{i,j}$.
Let $M'$ be the resulting set.
Since $M$ is a subset of $M'$, it remains a monitoring
edge-geodetic set of $G$. This proves (i).

We next show (ii).
A vertex of $M$ internal to an introduced path
requires at most two additions to obtain the
endpoints of that path.
Such a path has at most one endpoint of the
form $x_e^{i,j}$ or $y_e^{i,j}$, so it can
require at most two further additions.
Thus, each vertex of $M$ adds at most $4$ vertices to $M$, proving (ii).

We now prove (iii).
Fix $e=(u,w)$ and $i,j\in[r]$.
If $x_e^{i,j}$ or $y_e^{i,j}$ belongs to $S$,
the required pair was added explicitly.
Assume otherwise.

Choose two vertices of $M$ monitoring $t_e^{i,j}$,
and let $Q$ be a shortest path between them.
The endpoints of every subpath of $Q$ containing
$t_e^{i,j}$ also monitor this edge: otherwise,
an equally short replacement avoiding it would
contradict the choice of the original pair.

Let $P$ be the subpath between the first and
last vertices of $Q$ that belong to $\Gamma$
or are copies of vertices of $V_0$.
Every endpoint of $P$ inside a copy of $H$ belongs
to $S$. Indeed, if it is an endpoint of $Q$,
it already belongs to $M$.
Otherwise, the corresponding endpoint of $Q$
is internal to an introduced path, and $P$
starts or ends at an endpoint of that path.
Both endpoints of that introduced path were
added to $S$.

Consider the maximal subpath of $P$ containing
$t_e^{i,j}$ and lying in $H^{i,j}$, and denote
its endpoints by $p^{i,j}$ and $q^{i,j}$.
Their corresponding pair $\{p,q\}$ belongs
to $D$, since a shortest path in $H$ not including
$t_e$ would give an equally short replacement
avoiding $t_e^{i,j}$.

If $P$ continues outside this copy on both
sides, the portion between the nearest vertices
of $\Gamma$ on either side has length at least
seven: it uses two paths of length three and
$t_e^{i,j}$.
The added path of length six between these
vertices contradicts shortestness.

Suppose that exactly one endpoint of $P$ lies in
$H^{i,j}$, and denote it by $p^{i,j}$.
This vertex belongs to $S$ and is neither
$x_e^{i,j}$ nor $y_e^{i,j}$.
Starting from $p^{i,j}$, the path $P$ traverses
$t_e^{i,j}$ and leaves the copy through $q^{i,j}$,
then reaches a vertex $g\in\Gamma(q)$.
This requires at least six edges inside the copy
and three further edges to reach $g$.
However, concatenating $P(p^{i,j},\rho_{p,i})$
and $P(\rho_{p,i},g)$ gives a path of length $9$
avoiding $t_e^{i,j}$.
This path is no longer than the corresponding
subpath of $P$, contradicting the fact that
$p^{i,j}$ and $g$ monitor $t_e^{i,j}$.

Consequently, $P$ lies entirely in $H^{i,j}$,
and both its endpoints belong to $S$.
Since their corresponding pair belongs to $D$,
there is a path of length $12$ between them
through $\Gamma$ avoiding $t_e^{i,j}$.
Their monitoring this edge therefore forces
$P$ to have length at most $11$.

Neither endpoint is $x_e^{i,j}$ or $y_e^{i,j}$,
so $P$ contains a full path of length $5$
on each side of $t_e^{i,j}$.
These paths and the edge already account for
$11$ edges.
Thus, its endpoints are $L_{u,a}^{i,j}$ and
$R_{w,b}^{i,j}$ for some $a\in A$ and $b\in B$.
If $(a,b)\notin R_{u,w}$, the introduced path
of length $5$ between them would be shorter.
Hence $(a,b)\in R_{u,w}$, and both required
vertices belong to $S\subseteq M'$.
\end{proof}

We now use \cref{lem:all-edges-monitored-leaves,lem:normalize-meg-set}
to show that a polynomial-time $\alpha$-approximation
algorithm for \megset would yield a polynomial-time
$7\alpha$-approximation algorithm for \symmlabelcover,
thereby proving the following theorem.

\begin{theorem}
\label{thm:meg-set-inapproximability}
For every constant $0<\varepsilon<1$, \megset
admits no polynomial-time
$\left(2^{\log^{1-\varepsilon} n}\right)$-approximation algorithm,
unless
$\classNP\subseteq
\mathrm{DTIME}(n^{\polylog n})$.
\end{theorem}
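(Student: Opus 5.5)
The plan is to show that a polynomial-time $\alpha$-approximation for \megset on $G$ yields a polynomial-time $7\alpha$-approximation (or $O(\alpha)$) for \symmlabelcover. Then we compare the parameters $n=|V(G)|$ and $s$ to transfer the hardness of \Cref{thm:symmetric-label-cover-hardness}.

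The first step is an upper bound on $|M^*|$. Take an optimal labeling $(\phi^*,\psi^*)$ of cost $\mathrm{OPT}$. In every copy $H^{i,j}$, select the vertices $L_{u,a}^{i,j}$ for $a\in\phi^*(u)$ and $R_{w,b}^{i,j}$ for $b\in\psi^*(w)$, and add all of $F$. By \Cref{lem:all-edges-monitored-leaves}, $F$ monitors every edge outside $T_G$. For $t_e^{i,j}$ with $(a,b)\in R_{u,w}$ selected, I will argue that the concatenated path $P(L_{u,a},x_e)\,t_e\,P(y_e,R_{w,b})$ of length $11$ is the unique shortest path. The reasons are as follows:
\begin{itemize}
\item no direct path of length $5$ joins them, since $(a,b)\in R_{u,w}$;
\item routes leaving the copy use $\Gamma$ and cost at least $12$;
\item other in-copy routes through different $x_{e'},y_{e'}$ are longer.
\end{itemize}
This mirrors the reasoning in the last part of the proof of \Cref{lem:normalize-meg-set}. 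Hence $|M^*|\le r^2\,\mathrm{OPT}+|F|$ with $|F|=|\Gamma|=2r|V_0|\le 2rh$. Since $r=4h$ and $\mathrm{OPT}\ge 1$, we get $|F|\le r^2/2\le r^2\mathrm{OPT}/2$, so $|M^*|\le \tfrac32 r^2\mathrm{OPT}$.

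The second step is the lower bound. Given the $\alpha$-approximate $M$, apply \Cref{lem:normalize-meg-set} to obtain $M'$ with $|M'|\le 5\alpha|M^*|$. By property (iii), every copy $H^{i,j}$ induces a feasible labeling $(\phi_{i,j},\psi_{i,j})$ whose cost is at most the number of $L/R$ vertices of $M'$ in that copy. The copies are disjoint, so averaging over the $r^2$ copies gives one copy with cost at most $|M'|/r^2\le 5\alpha\cdot\tfrac32\,\mathrm{OPT}<8\alpha\,\mathrm{OPT}$. We pick the cheapest copy in polynomial time. (A slightly tighter accounting, e.g.\ refining the bound on $|F|$, gives the constant $7$; the exact constant is irrelevant.)

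The third step is the parameter transfer. Here $h=O(q^2(|A|+|B|)\cdot 5+q|A||B|\cdot 5\cdot q^2)=\mathrm{poly}(s)$. $G$ has $r^2$ copies plus $|\Gamma|$ and $F$ vertices and $O(|D|r^2)$ paths of length $6$, so $n\le s^{c}$ for an absolute constant $c$, and the construction takes polynomial time. Suppose \megset had a $2^{\log^{1-\varepsilon}n}$-approximation. Then \symmlabelcover would get ratio $8\cdot 2^{\log^{1-\varepsilon}(s^c)}=8\cdot 2^{c^{1-\varepsilon}\log^{1-\varepsilon}s}$. Choose $\delta<\varepsilon$, e.g.\ $\delta=\varepsilon/2$. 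For large $s$ this is below $2^{\log^{1-\delta}s}$, since $\log^{1-\delta}s/\log^{1-\varepsilon}s\to\infty$. Small instances are solved by brute force. This contradicts \Cref{thm:symmetric-label-cover-hardness}.

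I expect the main obstacle to be the first step: rigorously verifying that selected compatible label pairs monitor $t_e^{i,j}$. That requires ruling out equally short alternative paths inside $H^{i,j}$, such as routes via other $t_{e'}$ edges or via incompatible-pair paths, as well as paths through $\Gamma$. I would handle it by the same length-counting as in \Cref{lem:normalize-meg-set}. Any $L_{u,a}$–$R_{w,b}$ path inside $H$ consists of length-$5$ segments plus $t$-edges. A path of length $\le 11$ must therefore be either a single length-$5$ path or two length-$5$ segments joined by one $t$-edge, and the latter forces it to pass through $x_e,y_e$.
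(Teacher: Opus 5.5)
Your proposal is correct and follows the paper's proof essentially step by step. It uses the same witness set ($F$ plus the optimal labels in every copy) to get $|M^*|\le |F|+r^2\mathrm{OPT}$ with $|F|\le r^2/2$, the same application of \Cref{lem:normalize-meg-set} followed by averaging over the $r^2$ copies, and the same polynomial size bound with $\delta=\varepsilon/2$. The only difference is that you use $\mathrm{OPT}\ge 1$ (ratio $7.5\alpha$) where the paper uses $C^*\ge 2$ (ratio $\tfrac{25}{4}\alpha\le 7\alpha$), which is immaterial. In your in-copy length count, state explicitly why the length-$10$ case of two consecutive $5$-segments cannot occur. The reason is that the $5$-path neighbours of $L_{u,a}$ (vertices $x_{e'}$ and $R$-vertices) are disjoint from those of $R_{w,b}$ (vertices $y_{e'}$ and $L$-vertices).
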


\begin{proof}
Suppose that a polynomial-time
$\left(2^{\log^{1-\varepsilon} n}\right)$-approximation algorithm
for \megset exists.
Given a \symmlabelcover instance $\mathcal{I}$,
construct $G$. Let $C^*$ be the cost of an optimal
solution to $\mathcal{I}$, and let $M^*$
be an optimal monitoring edge-geodetic
set of $G$.
Let $\widehat{M}$ contain $F$ and, in every copy
$H^{i,j}$, the vertices corresponding to the labels
selected by an optimal solution of $\mathcal{I}$.
By \cref{lem:all-edges-monitored-leaves}, edges
outside $T_G$ are monitored by $\widehat{M} \cap L$.

For every $e=(u,w)$ and every $i,j\in[r]$,
the set $\widehat{M}$ contains vertices
$L_{u,a}^{i,j}$ and $R_{w,b}^{i,j}$ with
$(a,b)\in R_{u,w}$.
Their path containing $t_e^{i,j}$ has length eleven.
Any other path inside the copy uses at least three
paths of length five, while any path leaving the
copy has length at least twelve.
Hence, this pair monitors $t_e^{i,j}$.
Thus, $\widehat{M}$ is a monitoring edge-geodetic
set of $G$, and
$|M^*| \le |\widehat{M}| \le |F|+kC^*$.

Run the assumed algorithm on $G$, obtaining a set
$M$ with $|M|\le\alpha |M^*|$, where
$\alpha=2^{(\log n)^{1-\varepsilon}}$.
By \cref{lem:normalize-meg-set}, we can compute
in polynomial time a set $M'\subseteq V(G)$ such that
(i) $M'$ is a monitoring edge-geodetic set of $G$;
(ii) $|M'|\le5|M|$; and
(iii) for every $e=(u,w)\in E$ and every $i,j\in[r]$,
there exists $(a,b)\in R_{u,w}$ such that
$L_{u,a}^{i,j},R_{w,b}^{i,j}\in M'$.

Choose a copy $H^{i,j}$ minimizing the number of
vertices of $M'$ of the forms $L_{u,a}^{i,j}$ and
$R_{w,b}^{i,j}$.
Define $\phi(u)$ to contain precisely those labels
$a$ for which $L_{u,a}^{i,j}\in M'$, and define
$\psi(w)$ analogously using $R_{w,b}^{i,j}$.
Property~(iii) guarantees that $(\phi,\psi)$ is a solution for $\mathcal{I}$.

The cost of this solution equals the number of
selected vertices of these two forms in $H^{i,j}$.
Since the $k$ copies are vertex-disjoint, the total
number of such vertices belonging to $M'$ across
all copies is at most $|M'|$.
The chosen copy contains no more than their average
number, so $\cost(\phi,\psi)\le |M'|/k$.

Every monitoring edge-geodetic set contains all leaves
of the graph~\cite{Foucaud}, so $|F| \le |M^*|$.
By construction, $|F|=2r|V_0|\le2rh=k/2$.
Moreover, every feasible solution to the nonempty
instance $\mathcal{I}$ selects at least one label
for a vertex in $U$ and one for a vertex in $W$,
and therefore $C^*\ge2$.
Using $|M'|\le5|M|$, $|M|\le\alpha |M^*|$,
$|M^*| \le |F|+kC^*$, and $\frac{C^*}{4} \ge \frac{1}{2}$ we obtain:
\[
  \cost(\phi,\psi)
  \le \frac{|M'|}{k}
   \le \frac{5|M|}{k}
   \le \frac{5\alpha |M^*|}{k}
  \le \frac{5\alpha(|F|+kC^*)}{k}
  \le 5\alpha\left(C^*+\frac12\right)
  \le \frac{25}{4}\alpha C^*
   \le 7\alpha C^*.
\]

It remains to verify the size and running time
of the reduction.
The graph $H$ has $O(s^4)$ vertices and edges,
while $|V_0|=O(s^2)$.
Hence $k=O(s^8)$ and $|D|=O(s^4)$.
The $k$ copies of $H$ and the
$O(k|V_0|+k|D|)$ added paths contribute
$O(s^{12})$ vertices and edges in total,
including the vertices in $\Gamma$ and their
leaves, since all path lengths are constant.
The set $D$ can be computed by testing whether
deleting an edge of $T$ increases the distance
between a pair of vertices of $V_0$.

Thus both building the \megset instance, and recovering $(\phi, \psi)$ from a solution require polynomial time.
Since $n=O(s^{12})$, the approximation ratio
obtained for \symmlabelcover is at most
$
  7\cdot2^{(\log n)^{1-\varepsilon}}
  =2^{O((\log s)^{1-\varepsilon})}
  =o\!\left(2^{(\log s)^{1-\varepsilon/2}}\right)$.
For sufficiently large $s$, this contradicts
\cref{thm:symmetric-label-cover-hardness},
applied with $\delta=\varepsilon/2$, unless
$\classNP\subseteq
\mathrm{DTIME}(n^{\polylog n})$.
\end{proof}

\section{\texorpdfstring{$\classNP$}{NP}-hardness of \megset on Planar Graphs}
\label{sec:megset-planar-hardness}

In \cite{BCFL2025}, it was shown that \megset is $\classNP$-hard on $1$-apex graphs, and the complexity of \megset on planar graphs was explicitly left as an open problem.
In this section, we show that the problem remains $\classNP$-hard even in this case.

We give a reduction from the vertex cover problem restricted to cubic planar graphs (\vcp for short), which is
known to be $\classNP$-hard \cite{DinitzKatzKrakovski2009}. Let $G=(V,E)$ be an instance of \vcp with $n=|V|$ and a consider a straight-line planar embedding of $G$ (such an embedding can be found in polynomial time \cite{schnyder1990embedding}).
For each vertex $v$, we designate three \emph{ports}, labeled $0,1,2$ in clockwise order around $v$, to which its incident edges are attached. Each edge $e=(u,v)$ connects a port of $u$ with a port of $v$, and each port is connected to exactly one edge.

\begin{figure}
  \centering
  \includegraphics[width=.9\textwidth]{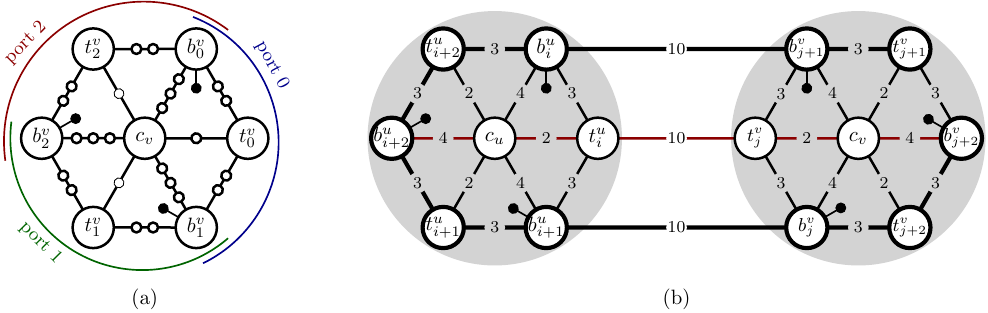}
  \caption{(a) The gadget used to replace a generic vertex $v$ of $G$. The portions of the outher cycle containing vertices $\{ b_i^v, t_i^v, b_{i+1}^v \}$ corresponding to port $i \in \{0,1,2\}$ are marked in blue, green, and red. (b) The replacement for an edge $(u,v) \in E(G)$ conencting port $i$ of $u$ with port $j$ of $v$. To reduce clutter, an edge labelled $\ell$ represents a path of length $\ell$ between its endvertices. The path $\pi_e$ used in the proof of \Cref{lem:from_vc_to_meg_set} is highlighted in red. The vertices in the set $X$ used in the proof of \Cref{lem:vc_intersects_gadgets} are shown in bold. The bold edges provide shortest paths from $b_{i+2}^u$ to the vertices in $X$.}
  \label{fig:planar_replacement}
\end{figure}

Our instance $H$ of \megset is obtained by local replacements of the 
vertices and of the edges of $G$.
More precisely, we replace each vertex $v$ of $G$ with the \emph{gadget} shown in Figure~\ref{fig:planar_replacement}~(a).
This gadget is a planar graph containing an outer cycle of $18$ vertices. The vertices encountered at every third position are named $b_{0}^v, t_{0}^v, b_{1}^v, t_{1}^v, b_{2}^v$, and $t_{2}^v$, in clockwise order. A center vertex $c_v$ is connected to each $b_i^v$ (resp.\ $t_i^v$) by a path of length $4$ (resp.\ $2$).
To lighten the notation, we write $b_{i}^v$ in place of $b_{i \bmod 3}^v$ so that $b_{3}^v$ coincides with $b_{0}^v$ and $b_{4}^v$ coincides with $b_{1}^v$.
Each of $b_{i}^v$ for $i=0,\dots,2$ is connected to a leaf $\ell_{i}^v$ via the edge $(b_{i}^v, \ell_{i}^v)$. Vertices $b_{i}^v$, $t_{i}^v$, and $b_{i+1}^v$ correspond to the $i$-th port of $v$. We denote by $S_v$ the set of vertices of $v$'s gadget.
Each edge $e=(u,v)$ of $G$ is replaced by three paths in $H$, as shown in Figure~\ref{fig:planar_replacement}~(b). In particular, if $e$ is connected to port $i$ of $u$ and port $j$ of $v$, then $H$ contains: 
(i) a path from $t_{i}^u$ to $t_{j}^v$ of length $10$,
(ii) a path from $b_{i}^u$ to $b_{j+1}^v$ of length $10$, and 
(iii) a path from $b_{i+1}^u$ to $b_j^v$ of length $10$. We refer to the path in (i) as a $t$-path and to the paths in (ii) and (iii) as $b$-paths. We denote by $P_{u,v}$ the set of internal vertices of the $b$- and $t$-paths connecting $u$ and $v$, and by $L$ the set of all vertices of degree $1$ in $H$.
\Cref{fig:planar_example} shows the graph $H$ obtained by applying the above construction to a small graph $G$.

\begin{figure}
  \centering
  \includegraphics[width=0.95\textwidth]{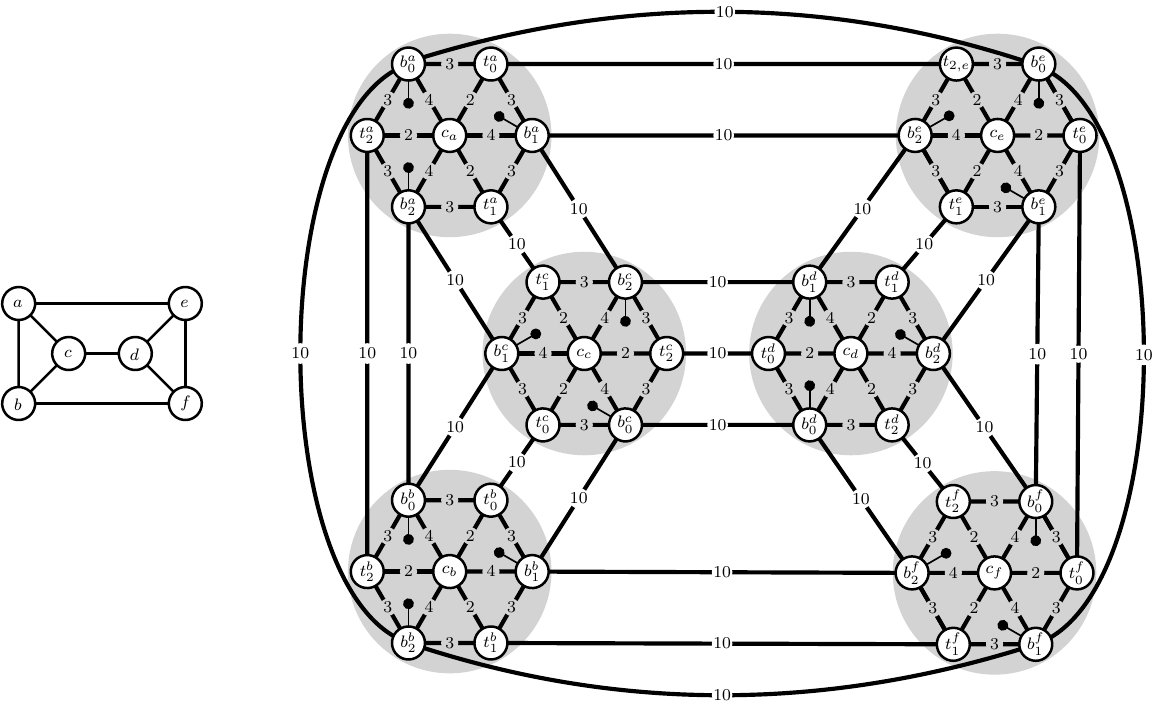}
  \caption{Left: the graph $G$ of an instance of vertex cover. Right: the graph $H$ obtained by applying our reduction to $G$. Vertex gadgets have a gray background.}
  \label{fig:planar_example}
\end{figure}

\begin{lemma}\label{lem:from_vc_to_meg_set}
If $C\subseteq V$ is a vertex cover of $G$ then
$M = \{ c_v : v \in C \} \cup L$ is a monitoring edge-geodetic set of $H$.
\end{lemma}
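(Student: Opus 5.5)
The plan is to exhibit, for every edge of $H$, a pair in $M$ that monitors it, and to verify each claim by a distance computation that exploits the long connecting paths. Each vertex gadget $S_v$ has three leaves $\ell_0^v,\ell_1^v,\ell_2^v$. For $e=(u,v)$ attached to port $i$ of $u$ and port $j$ of $v$, the $b$-paths join $b_i^u$ with $b_{j+1}^v$ and $b_{i+1}^u$ with $b_j^v$, and these endpoints all carry leaves. I will split the edges of $H$ into three classes: edges of the outer cycles, edges of the center spokes inside gadgets, and edges of the $b$- and $t$-paths.

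\emph{Outer-cycle edges.} Take $\ell_i^v$ and $\ell_{i+1}^v$. The arc $b_i^v, t_i^v, b_{i+1}^v$ of the outer cycle has length $6$. The alternatives are the route through $c_v$, of length $4+4=8$, the opposite arc of length $12$, or detours through edge replacements of length at least $10$. So this arc is the unique shortest $b_i^v$--$b_{i+1}^v$ path, and the two leaves monitor its six edges. Doing this for $i=0,1,2$ covers the whole outer cycle of $S_v$, for every $v$.

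\emph{$b$-path edges.} For the path from $b_i^u$ to $b_{j+1}^v$, take $\ell_i^u$ and $\ell_{j+1}^v$. The path itself has length $10$. Any other route must use another length-$10$ connecting path together with some travel inside the gadgets, so it is strictly longer. For example, going through the $t$-path costs $3+10+3=16$. Hence the two leaves monitor every edge of the $b$-path.

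\emph{$t$-path and spoke edges.} This class needs the vertex cover. Let $e=(u,v)$ and assume $u\in C$, so $c_u\in M$. I will take the path $\pi_e$ that goes from $c_u$ to $t_i^u$ (length $2$), then along the $t$-path to $t_j^v$ (length $10$), then down to $c_v$ and on to some $b_k^v$, ending at the leaf $\ell_k^v$, with $k$ chosen so that $\pi_e$ is the unique shortest path from $c_u$ to $\ell_k^v$. The obvious candidate is $k=j+2$, the vertex not on port $j$; the figure's path $\pi_e$ presumably confirms this choice. This gives length $2+10+2+4+1=19$. The competitors are:
\begin{itemize}
\item $c_u\to b_{i}^u$ or $b_{i+1}^u$ (length $4$), then a $b$-path (length $10$), then along the outer cycle from $b_{j+1}^v$ or $b_j^v$ to $b_{j+2}^v$ (length $6$), then the leaf: total $21$;
\item the same start and $b$-path, but then through $c_v$ ($4+4$): total $23$;
\item $c_u\to t_i^u\to$ $t$-path, then along the outer cycle from $t_j^v$ to $b_{j+2}^v$ (length $3+6=9$, versus $2+4=6$ via the center): longer.
\end{itemize}
So $\pi_e$ is unique, and $\{c_u,\ell_{j+2}^v\}$ monitors the $t$-path, the spokes $c_u t_i^u$, $c_v t_j^v$, $c_v b_{j+2}^v$, and the edge to the leaf.

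It remains to cover every spoke of every gadget, including gadgets of vertices not in $C$. Since every vertex $v$ has three incident edges and $C$ is a cover, the spokes $c_v t_j^v$ for all three ports $j$ are covered by the pairs above, whether they arise from $c_v\in M$ or from the neighbour's center. The spokes $c_v b_k^v$ are harder: the choices $k=j+2$ over $j=0,1,2$ already give all three, but only when every neighbour of $v$ lies in $C$. If $v\in C$, the pair $\{c_v,\ell_k^v\}$ covers the spoke to $b_k^v$ directly. If $v\notin C$, all three neighbours are in $C$, so all three $k$ occur. The same case analysis applies to the $t$-spokes of $u$ itself, using $c_u$ as one endpoint.

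The main obstacle is the distance bookkeeping in the $t$-path case: checking that no route mixing outer-cycle arcs with $b$-paths ties the length $19$, and settling the exact choice of $k$ against the figure. I would first verify the candidate $k=j+2$ by listing all paths with at most two gadget traversals.
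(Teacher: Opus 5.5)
Your proposal is correct and follows the paper's proof: it uses the same monitoring pairs, namely $\{\ell_i^v,\ell_{i+1}^v\}$ for the outer cycles, the end-leaves for the $b$-paths, $\{c_u,\ell_{j+2}^v\}$ for the $t$-path portion of $\pi_e$, and $\{c_u,\ell_k^u\}$ for the $b$-spokes of cover vertices, and your choice $k=j+2$ is forced, since for $k\in\{j,j+1\}$ a $b$-path route of length $14$ beats the $t$-path route of length $15$. The paper replaces your $v\in C$ / $v\notin C$ case split with the observation that every remaining edge lies on some $\pi_e$, and it states explicitly one bound you leave implicit: any route leaving $S_u\cup S_v\cup P_{u,v}$ crosses two length-$10$ paths and so has length at least $20>19$. (The pendant edges, which your classification omits, are trivially monitored by any pair containing their leaf.)
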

\begin{proof}
  For a vertex $v \in V$, the unique shortest path in $H$ from $b_i^v$ to $b_{i+1}^v$ follows a portion of the outer cycle of $v$'s gadget that contains $t_i^v$ and has length $6$. Hence, all edges in the unique shortest path from $\ell_i^v$ to $\ell_{i+1}^v$ are monitored by $\{\ell_i^v,\ell_{i+1}^v\}$. Thus, all edges incident to vertices in $L$ and all edges of the outer cycles in the vertex gadgets are monitored by $M$.

  All edges of $H$ that belong to some $b$-path with endvertices $b_{i}^u$ and $b_{j}^v$ for some choice of $u,v \in V$ and $i,j \in \{0,1,2\}$, are monitored by the pair $\{\ell_{i}^u, \ell_{j}^v\}$.

  For a generic edge $e=(u,v) \in E$ connecting port $i$ of vertex $u$ to port $j$ of vertex $v$, define $\pi_e$ as the path obtained by concatenating:
  (i) the shortest path from $b_{i+2}^u$ to $c_u$ of length $4$ in $u$'s gadget,
  (ii) the shortest path from $c_u$ to $t_{i}^u$ of length $2$  in $u$'s gadget,
  (iii) the $t$-path from $t_{i}^u$ to $t_{j}^v$ of length $10$,
  (iv) the shortest path from $t_{j}^v$ to $c_v$ of length $2$ in $v$'s gadget, and
  (v) the shortest path from $c_v$ to $b_{j+2}^v$ of length $4$  in $v$'s gadget.
   See \Cref{fig:planar_replacement}~(b). 

  Observe that all edges that we have not already shown to be monitored belong to some path $\pi_e$. Hence, it suffices to show that all edges on $\pi_e$ are monitored.
  Assume w.l.o.g.\ that $u \in C$, which implies $c_u \in M$.
  The edges in subpath (i) of $\pi_e$ are monitored by $\{\ell_{i+2}^u, c_u\}$,
  while the remaining edges of $\pi_e$ are monitored by $\{c_u, \ell_{j+2}^v\}$.
  To see the latter, notice that the concatenation of the subpath of $\pi_e$ from $c_u$ to $b_{j+2}^v$ with the edge $(b_{j+2}^v, \ell_{j+2}^v)$ has length $19$.
  Since any simple path containing at least one vertex not in $S_u \cup S_v \cup P_{u,v}$ has length at least $20$, we can consider the subgraph of $H$ induced by $S_u \cup S_v \cup P_{u,v}$.
  In such a subgraph, all paths from $c_u$ to $\ell_{j+2}^v$ have a length of at least $4 + 10+ 6 = 20$ (see \Cref{fig:planar_replacement}~(b)).
\end{proof}

\begin{lemma}\label{lem:vc_intersects_gadgets}
Let $M$ be a monitoring edge-geodetic set of $H$.
Then, for every $(u,v) \in E$, the set $M$ intersects at least one of $S_u \setminus L$, $S_v \setminus L$, and $P_{u,v}$.
\end{lemma}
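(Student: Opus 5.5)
Suppose, for contradiction, that $M$ avoids $S_u\setminus L$, $S_v\setminus L$ and $P_{u,v}$, and let $e=(u,v)$ connect port $i$ of $u$ to port $j$ of $v$. We will exhibit an edge of the $t$-path between $t_i^u$ and $t_j^v$ that no pair of $M$ monitors. The natural choice is its middle edge, or any of its edges. Pick a pair $\{x,y\}\subseteq M$ that monitors this edge and let $Q$ be a shortest $x$--$y$ path. As in the proof of \Cref{lem:normalize-meg-set}, every subpath of $Q$ containing the edge is monitored by its own endpoints. Since $x,y\notin P_{u,v}$, the path $Q$ traverses the whole $t$-path. Moreover, $x$ and $y$ lie outside $(S_u\cup S_v)\setminus L$ and $P_{u,v}$, so each of them is either a leaf $\ell_k^u$ or $\ell_k^v$, or a vertex outside $S_u\cup S_v\cup P_{u,v}$. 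Let $p$ and $q$ be the vertices where $Q$ last enters, and first leaves, the region $S_u\cup S_v\cup P_{u,v}$ on each side. Because the region is separated from the rest of $H$ only through the ports' $b$/$t$ vertices of other ports and through leaves, $p$ and $q$ belong to the set $X$ of boundary vertices. That set consists of $b_k^u$ and $t_k^u$ for the two other ports of $u$ (that is, $b_{i+2}^u$ together with $t_{i+1}^u$, $t_{i+2}^u$ and $b$'s attached to other edges), and symmetrically for $v$. It also contains the vertices $b_k^u$ that carry leaves, which is all $b$'s.

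The core step is a local distance computation inside the subgraph induced by $S_u\cup S_v\cup P_{u,v}$, in the spirit of \Cref{fig:planar_replacement}~(b). For each boundary vertex $z\in X$ on the $u$ side, we show there is a shortest path from $z$ to $t_i^u$ that reaches $t_i^u$ via a route ending along the outer cycle, or along the $b$-paths, that is at least as short as the route through $c_u$. These are the bold edges in the figure, which give shortest paths from $b_{i+2}^u$ to $X$. We then check that any path $z\to t_i^u\to$ ($t$-path) $\to t_j^v\to z'$ has an equally short alternative $z\to b_i^u$ or $b_{i+1}^u\to$ ($b$-path) $\to b_{j+1}^v$ or $b_j^v\to z'$ that avoids the $t$-path entirely. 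The lengths involved are $2$ from $c$ to $t$, $4$ from $c$ to $b$, $3$ along the cycle from $t$ to $b$, and $10$ for both path types. A $t$-path route pays at least $3$ on each side to move from $t$ to a $b$, or to come back through the center. The $b$-path route pays at most the same, because $b_i^u$ and $b_{i+1}^u$ are at cycle-distance $3$ from $t_i^u$, and likewise on the $v$ side. Comparing these values case by case should give, for every pair $z,z'\in X$ (including the leaves themselves), $d(z,z')$ attained by a path avoiding the $t$-path. This contradicts that $\{x,y\}$ monitors the chosen edge.

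The main obstacle is the case analysis over the boundary vertices. The troublesome vertex is $b_{i+2}^u$, which is opposite the port: from it, going via $c_u$ costs $4+2=6$, while going along the cycle to $t_i^u$ also costs $6$, and reaching $b_i^u$ or $b_{i+1}^u$ costs only $6$ along the cycle. So I would first tabulate the distances from each of $b_{i+2}^u$, $t_{i+1}^u$, $t_{i+2}^u$, $c_u$-free paths and the leaves to $\{t_i^u,b_i^u,b_{i+1}^u\}$, and verify that each is reached with slack $\ge 3$ toward the appropriate $b$. It is exactly here that the absence of $c_u$ and $c_v$ from $M$ is used: $\pi_e$ from \Cref{lem:from_vc_to_meg_set} shows that with $c_u\in M$ the $t$-path is monitored. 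One also has to handle the endpoint choice among $b_i^u$ and $b_{i+1}^u$ together with the matching $b_{j+1}^v$ or $b_j^v$, so that the two $b$-paths cross correctly. The orientation convention (port $i$ of $u$ to $b_{j+1}^v$) is what makes $b_i^u$ and $b_{j+1}^v$ lie on the same side, and I would verify this carefully using the clockwise labelling.
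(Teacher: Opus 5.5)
\paragraph{The gap: the chosen edge can be monitored.} Your proof fails at the choice of edge. The edges of the $t$-path joining $t_i^u$ and $t_j^v$ \emph{can} be monitored by vertices outside $(S_u\setminus L)\cup(S_v\setminus L)\cup P_{u,v}$, so the case analysis you defer cannot succeed.

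Take the ``crossing'' boundary pair $\{t_{i+1}^u,t_{j+1}^v\}$.
\begin{itemize}
\item Through the $t$-path the cost is $d(t_{i+1}^u,t_i^u)+10+d(t_j^v,t_{j+1}^v)=4+10+4=18$, and each gadget distance is realized only through the center.
\item The $b$-path from $b_i^u$ to $b_{j+1}^v$ gives $6+10+3=19$.
\item The $b$-path from $b_{i+1}^u$ to $b_j^v$ gives $3+10+6=19$.
\end{itemize}
The same happens for $\{t_{i+2}^u,t_{j+2}^v\}$. So your claim that every pair in $X$ has a shortest path avoiding the $t$-path is false. Your heuristic that the $b$-route ``pays at most the same'' on each side ignores that the cheap end of each $b$-path (cost $3$) is paired with the expensive end (cost $6$, through the opposite $b$) on the other gadget.

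This is realized by vertices that $M$ is allowed to contain. Let $x$ be the neighbour of $t_{i+1}^u$ on the $t$-path of the edge of $G$ at port $i+1$ of $u$. Let $y$ be the neighbour of $t_{j+1}^v$ on the $t$-path of the edge at port $j+1$ of $v$. Then $x,y\notin S_u\cup S_v\cup P_{u,v}$ and $d(x,y)=20$. Moreover, the unique shortest $x$--$y$ path contains the whole $t$-path of $(u,v)$. Hence no argument that uses only ``$M$ avoids the three sets'' can show that a $t$-path edge is unmonitored. (Also, $t_i^u$ is at cycle distance $9$ from $b_{i+2}^u$, not $6$.)

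\paragraph{The missing idea.} You need an edge all of whose shortest paths are pinned to one known boundary vertex. The paper uses $e=(b_{i+2}^u,b')$, the first edge of the spoke from $b_{i+2}^u$ to $c_u$.
\begin{itemize}
\item A shortest path of a monitoring pair through $e$ passes $b_{i+2}^u$ on one side.
\item On the other side it stays inside $H'=H[S_u\cup S_v\cup P_{u,v}]$ until it reaches a first vertex $x\in X=(C_u\setminus\{t_i^u\})\cup(C_v\setminus\{t_j^v\})$. Here $C_z$ is the set of the six $b$/$t$ vertices of $z$'s gadget.
\item Consequently $\{b_{i+2}^u,x\}$ would itself monitor $e$.
\end{itemize}
So only the pairs $\{b_{i+2}^u,y\}$ with $y\in X$ need checking. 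Each of them has a shortest path in $H'$ that avoids $e$: these are the bold edges of \Cref{fig:planar_replacement}~(b). The tight case is $y=b_{j+2}^v$, which has length $22$ both through the spoke and via a $b$-path. You correctly noticed $b_{i+2}^u$ and the bold edges, but they certify that this spoke edge is unmonitored, not the $t$-path.
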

\begin{proof}
For $z \in V$, let $C_z = \{ t_i^z : i \in \{0,1,2\} \} \cup \{ b_{i}^z : i \in \{0,1,2\} \}$.
Let $i$ and $j$ be such that $(u,v)$ connects port $i$ of vertex $u$ to port $j$ of vertex $v$ in $G$. Let $e = (b_{i+2}^u, b')$  be the first edge on the shortest path connecting $b_{i+2}^u$ to $c_u$ in $u$'s gadget.
Let $X = (C_u \setminus \{ t_{i}^u \}) \cup (C_v \setminus \{ t_{j}^v\})$.

Suppose, for a contradiction, that $M$ does not intersect any of $S_u \setminus L$, $S_v \setminus L$, and $P_{u,v}$. Since $X \subseteq (S_u \setminus L) \cup (S_v \setminus L)$, we have $M \cap X = \emptyset$.

Let $\{a,a'\} \in \binom{M}{2}$ be a pair of vertices that monitors $e$ in $H$, let $\pi$ be any shortest path from $a$ to $a'$, and assume w.l.o.g.\ that $\pi$ encounters $b_{i+2}^u$ before $b'$.
Let $x$ be the first vertex of $X$ traversed by the subpath of $\pi$ from $b'$ to $a'$ (such a vertex must exist: $a' \in M$ and $M \cap X = \emptyset$, while $b' \notin X$ and $a'$ is disconnected from $b'$ in $H-X$).
The pair $\{b_{i+2}^u, x\}$ monitors $e$, and the subpath of $\pi$ from $b_{i+2}^u$ to $x$ is a shortest path in $H$ that contains only vertices in the subgraph $H' = H[S_u \cup S_v \cup P_{u,v}]$. Then, all shortest paths from $b_{i+2}^u$ to $x$ in $H'$ must traverse $e$.
This is contradictory since, for every $y \in X$, there is a shortest path $\pi_y$ in $H'$ from $b_{i+2}^u$ to $y$ such that $\pi_y$ avoids $e$ (see \Cref{fig:planar_replacement}~(b)). 
\end{proof}

\begin{theorem}
  \megset is $\classNP$-hard even when instances are restricted to planar graphs with degenaracy $2$ and maximum degree $6$.
\end{theorem}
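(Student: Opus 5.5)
Our plan is to show that $G$ has a vertex cover of size at most $k$ if and only if $H$ has a monitoring edge-geodetic set of size at most $k+|L|$, where $|L|=3n$. Before that, we check the structural claims about $H$.

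\emph{Structural properties.} Planarity follows from the straight-line embedding of the cubic planar graph $G$. Each gadget is planar, with its outer cycle split into three port segments in clockwise order. The three parallel paths replacing an edge $(u,v)$ attach to port $i$ of $u$ and port $j$ of $v$. They attach with reversed orientation, $b_i^u\leftrightarrow b_{j+1}^v$ and $b_{i+1}^u\leftrightarrow b_j^v$, which is exactly what is needed for them to be drawn without crossings inside a thin strip around the segment of $e$. Distinct edges use distinct ports, so the strips are disjoint.

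For the maximum degree, $c_v$ has degree $6$. Each $t_i^v$ has degree $4$: two cycle neighbours, its path to $c_v$, and one $t$-path. Each $b_i^v$ lies in ports $i-1$ and $i$, so it has two cycle neighbours, its path to $c_v$, the leaf $\ell_i^v$, and two $b$-paths, which gives $6$. All other vertices have degree at most $2$.

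For $2$-degeneracy, we repeatedly delete vertices of degree at most $2$. First delete the leaves and the internal vertices of all paths. After that, each $b_i^v$ and $t_i^v$ has degree $0$ and can be deleted, and then so can $c_v$.

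\emph{Forward direction.} If $C$ is a vertex cover with $|C|\le k$, then by \Cref{lem:from_vc_to_meg_set} the set $\{c_v: v\in C\}\cup L$ is a monitoring edge-geodetic set of size at most $k+3n$.

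\emph{Reverse direction.} Let $M$ be a monitoring edge-geodetic set with $|M|\le k+3n$. Every leaf belongs to $M$ \cite{Foucaud}, so $M\setminus L$ has at most $k$ vertices. We map each vertex of $M\setminus L$ to a vertex of $G$: a vertex in $S_v\setminus L$ maps to $v$, and a vertex in $P_{u,v}$ maps to $u$, chosen arbitrarily. Call the image $C$, so $|C|\le k$. For every edge $(u,v)\in E$, \Cref{lem:vc_intersects_gadgets} says that $M$ meets $S_u\setminus L$, $S_v\setminus L$, or $P_{u,v}$. In each case $u$ or $v$ is in $C$, so $C$ is a vertex cover.

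Finally, the construction of $H$ has constant size per vertex and per edge of $G$, so it takes polynomial time. NP-hardness then follows from that of \vcp \cite{DinitzKatzKrakovski2009}.

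The main obstacle is the planarity argument. We have to confirm that the port ordering and the crossed $b$-path attachments can be embedded consistently, which needs a careful drawing argument rather than any computation.
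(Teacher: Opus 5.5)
Your proposal is correct and follows the paper's proof: the forward direction uses \Cref{lem:from_vc_to_meg_set}; the reverse direction charges each vertex of $M\setminus L$ to a vertex of $G$ through the partition of $V(H)\setminus L$ into the sets $S_v\setminus L$ and $P_{u,v}$ and then applies \Cref{lem:vc_intersects_gadgets}; and the degree and degeneracy checks are the same. The paper takes planarity as evident from the construction, and your strip argument correctly supplies that detail, so it is less of an obstacle than you suggest: the reversed attachments $b_i^u\leftrightarrow b_{j+1}^v$ and $b_{i+1}^u\leftrightarrow b_j^v$ match the opposite clockwise orientations of two facing ports.
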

\begin{proof}
By construction, $|L|=3n$. We show that $G$ admits a vertex cover of size at most $k$ if and only if $H$ has a monitoring edge-gedetic set of size at most $k + 3n$.

By \Cref{lem:from_vc_to_meg_set}, if there exists a vertex cover $C$ of $G$ of size at most $k$, then $M = \{c_v: v \in C\} \cup L$ is a monitoring edge-geodetic set of $H$ of size at most $k + |L| = k + 3n$.
 
Suppose now that there exists a monitoring edge-geodetic set $M$ of $H$ of size at most $k + 3n$. Observe that $L \subseteq M$ since a leaf vertex must necessarily belong to $M$ for its sole incident edge to be monitored.

We build a set $M'$ in two steps: (i) for each vertex $v \in V$ such that $M$ intersects $S_v \setminus L$, we add $v$ to $M'$; (ii) for each edge $(u,v) \in E$ such that $M$ intersects $P_{u,v}$, we choose one vertex $z \in \{u, v\}$ and add $z$ to $M'$ (if $z$ is already in $M'$, we do nothing).
By \Cref{lem:vc_intersects_gadgets}, for every edge $(u,v) \in E$, $M'$ contains at least one of $u$ and $v$.
Indeed, if $M \cap (S_u \setminus L) \neq \emptyset$, then $u \in M'$ by step (i); if $M \cap (S_v \setminus L) \neq \emptyset$, then $v \in M'$ by step (i); and if $M \cap P_{u,v} \neq \emptyset$, then some $z \in \{u,v\}$ belongs to $M'$ by step (ii).
Hence, $M'$ is a vertex cover for $G$.
Moreover, the sets $S_v \setminus L$ for $v \in V$ and the sets $P_{u,v}$ for $(u,v) \in E$ partition $V(H) \setminus L$. Each set in this partition that intersects $M$ contributes at most one vertex to $M'$, while the other sets contribute $0$.
Therefore, $|M'| \le |M \setminus L| = |M| - |L| \le k + 3n - 3n = k$.

The claim follows by observing that the graph $H$ has maximum degree $6$, and that deleting all vertices of degree at most $2$ results in a graph with no edges.
\end{proof}

\section{Lower Bounds via Subsetwise Distance Preservers}
\label{sec:preservers}

In this section, we relate monitoring edge-geodetic sets to subsetwise
distance preservers and use this connection to derive lower bounds
that depend on the density of the input graph. 

Let $G=(V,E)$ be a graph and let
$P\subseteq\binom{V}{2}$. A spanning subgraph $H$ of $G$ is a
\emph{pairwise distance preserver} of $G$ w.r.t.\ the pairs $P$ if
$ d_H(x,y)=d_G(x,y)$ for every $\{x,y\} \in P$.
If $P = \binom{S}{2}$ for some $S \subseteq V$, then $H$ is called a \emph{subsetwise distance preserver} of $G$ w.r.t.\ $S$.

\begin{lemma}
\label{lem:meg-set-preserver}
Let $M$ be a monitoring edge-geodetic set of $G=(V,E)$. 
The only subsetwise distance preserver of $G$ w.r.t.\ $M$ is $G$ itself.
\end{lemma}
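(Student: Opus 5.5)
We argue by contradiction. \(G\) itself is trivially a subsetwise distance preserver w.r.t.\ \(M\), so it suffices to show that every proper spanning subgraph fails to be one. Let \(H\) be a spanning subgraph of \(G\) with \(E(H)\subsetneq E(G)\), and pick an edge \(e\in E(G)\setminus E(H)\). Since \(M\) is a monitoring edge-geodetic set, \(\mathcal{M}_M(e)\neq\emptyset\). Hence there is a pair \(\{x,y\}\in\binom{M}{2}\) such that \(e\) lies on every path in \(\mathcal{P}_G(x,y)\).

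The key step is to show that \(d_H(x,y)>d_G(x,y)\). Since \(H\) is a subgraph of \(G\), every \(x\)–\(y\) path in \(H\) is also an \(x\)–\(y\) path in \(G\). In particular, \(d_H(x,y)\ge d_G(x,y)\). Suppose equality held. If \(d_H(x,y)\) is finite, any shortest \(x\)–\(y\) path \(\pi\) in \(H\) has length \(d_G(x,y)\), so \(\pi\in\mathcal{P}_G(x,y)\). But \(\pi\) does not contain \(e\), since \(e\notin E(H)\). This contradicts \(e\in\mathcal{M}_G(x,y)=\bigcap_{P\in\mathcal{P}_G(x,y)}E(P)\). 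If instead \(x\) and \(y\) are disconnected in \(H\), then \(d_H(x,y)=\infty>d_G(x,y)\), using the paper's standing assumption that \(G\) is connected (or simply that \(x,y\) lie on a common path through \(e\)).

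Therefore \(H\) does not preserve the distance between \(x\) and \(y\), which is a pair in \(\binom{M}{2}\). So \(H\) is not a subsetwise distance preserver w.r.t.\ \(M\), and the only such preserver is \(G\).

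There is no real obstacle here. The only points needing care are that \(\mathcal{P}_G(x,y)\) is nonempty, so the intersection defining \(\mathcal{M}_G(x,y)\) is meaningful, and the disconnected case for \(H\); both are immediate.
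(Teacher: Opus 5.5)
Your proposal is correct and is essentially the paper's argument: a monitoring pair $\{x,y\}$ for $e$ forces $e$ onto every shortest $x$--$y$ path, so any spanning subgraph preserving $d(x,y)$ must contain $e$. The paper states this directly for every edge, while you argue the contrapositive via a missing edge $e\notin E(H)$. The additional details you check (that $G$ is itself a preserver, and the case where $x$ and $y$ are disconnected in $H$) are harmless extra rigor.
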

\begin{proof}
Let $H\subseteq G$ be a subsetwise distance preserver of $G$ w.r.t.\ $M$.
By definition we have $V(H) = V$ and $E(H) \subseteq E$, so it suffices to show that $E \subseteq E(H)$.
Focus on a generic edge $e\in E$. Since $M$ is a monitoring edge-geodetic
set, there exists a pair $\{x,y\}\in \binom{M}{2}$ that monitors $e$ in $G$. Then, every shortest path from $x$ to $y$ in $G$ contains $e$.
Since $H$ preserves the distance between $x$ and $y$, we must have $e\in E(H)$.
\end{proof}

\begin{corollary}
  \label{cor:meg-set-preserver}
  Let $f(n, |S|)$ be a function such that, for any graph $G$ with $n$ vertices and any $S \subseteq V(G)$,
  there exists a subsetwise distance preserver of $G$ w.r.t.\ $S$ with
  at most $f(n, |S|)$ edges.
  Then, every very monitoring edge-geodetic set of a graph with $n$ vertices and
  $m$ edges has size at least $g(n,m) = \min\{s \in \mathbb{N} : f(n,s) \ge m\}$.
\end{corollary}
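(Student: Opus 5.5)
The corollary follows directly from \Cref{lem:meg-set-preserver}, combined with the hypothesis on $f$. Fix a graph $G=(V,E)$ with $n$ vertices and $m$ edges, and let $M$ be any monitoring edge-geodetic set of $G$; write $s=|M|$. The goal is to show $f(n,s)\ge m$, which immediately gives $s\ge\min\{s'\in\mathbb{N} : f(n,s')\ge m\}=g(n,m)$, since $s$ itself belongs to the set over which the minimum is taken.

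To prove $f(n,s)\ge m$, apply the hypothesis with $S=M$. It yields a subsetwise distance preserver $H$ of $G$ w.r.t.\ $M$ with $|E(H)|\le f(n,|M|)=f(n,s)$. By \Cref{lem:meg-set-preserver}, the only subsetwise distance preserver of $G$ w.r.t.\ $M$ is $G$ itself, so $H=G$ and $|E(H)|=m$. Hence $m\le f(n,s)$.

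The only point requiring care is that the set defining $g(n,m)$ is nonempty, so that the minimum is well defined. The argument above already certifies that $|M|$ lies in this set whenever a monitoring edge-geodetic set exists. The whole vertex set $V$ is always one (every edge $(u,v)$ is monitored by $\{u,v\}$), so such an $M$ exists and the minimum is attained. There is no real obstacle here: the statement is a direct contrapositive-style consequence of the lemma.
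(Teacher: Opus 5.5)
Your proof is correct and is the paper's argument: apply the hypothesis with $S=M$, use \Cref{lem:meg-set-preserver} to conclude that the preserver must be $G$ itself, deduce $f(n,|M|)\ge m$, and hence $|M|\ge g(n,m)$. Your remark on why the minimum defining $g(n,m)$ is well defined is a harmless addition that the paper leaves implicit.
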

\begin{proof}
Let $G$ be a graph with $n$ vertices and $m$ edges, and let $M$ be
any monitoring edge-geodetic set of $G$. By the assumption on $f$,
there exists a subsetwise distance preserver of $G$ w.r.t.\ $M$ with
at most $f(n,|M|)$ edges. By \cref{lem:meg-set-preserver}, this
preserver must be $G$ itself, so $f(n,|M|) \ge m$. Therefore,
$|M| \ge g(n,m)$ by the definition of $g$.
\end{proof}

Bodwin and Vassilevska Williams~\cite{BodwinWilliams2016} proved
that graphs with $n$ vertices admit pairwise distance preservers w.r.t.\ $P$ with $O(n^{2/3} |P|^{2/3}+n |P|^{1/3})$ edges, and hence subsetwise distance preserves w.r.t.\ $S$ with $f(n, |S|) = O(n^{2/3} |S|^{4/3} + n |S|^{2/3})$ edges.
Combining their result with \Cref{cor:meg-set-preserver} gives the following lower bound:
\begin{theorem}
\label{thm:meg-set-preserver-lower-bound}
All monitoring edge-geodetic sets of an $n$-vertex, $m$-edge graph contain
$\Omega\left(
        \min\left\{
            \frac{m^{3/4}}{n^{1/2}},
            \left(\frac{m}{n}\right)^{3/2}
        \right\}
    \right)$ vertices.
In particular, if $m=\Theta(n^2)$, then $|M|=\Omega(n)$.
\end{theorem}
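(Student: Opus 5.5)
The plan is to apply \Cref{cor:meg-set-preserver} with the bound of Bodwin and Vassilevska Williams, namely $f(n,s) = c\,(n^{2/3}s^{4/3} + n s^{2/3})$ for some absolute constant $c>0$, and then invert this $f$.

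Let $M$ be any monitoring edge-geodetic set of an $n$-vertex, $m$-edge graph $G$, and write $s=|M|$. By \cref{lem:meg-set-preserver}, the only subsetwise distance preserver of $G$ with respect to $M$ is $G$ itself. Hence
\[
m \le c\,(n^{2/3}s^{4/3} + n s^{2/3}) \le 2c\max\{n^{2/3}s^{4/3},\, n s^{2/3}\}.
\]
This gives two cases.

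\emph{First term dominates.} Then $n^{2/3}s^{4/3}\ge m/(2c)$. Raising both sides to the power $3/4$ gives $s \ge (m/(2c))^{3/4}/n^{1/2}$, which is $\Omega(m^{3/4}/n^{1/2})$.

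\emph{Second term dominates.} Then $n s^{2/3}\ge m/(2c)$. Raising both sides to the power $3/2$ gives $s \ge (m/(2cn))^{3/2}$, which is $\Omega((m/n)^{3/2})$.

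In either case, $s$ is at least the smaller of the two quantities, up to constants. This is exactly the claimed bound $\Omega(\min\{m^{3/4}/n^{1/2},(m/n)^{3/2}\})$. Note that the case split is on which term of $f$ is large; it is not a case split on the value of $s$, so no circularity arises.

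For the final claim, substitute $m=\Theta(n^2)$. The first term becomes $\Theta(n^{3/2}/n^{1/2})=\Theta(n)$, and the second becomes $\Theta(n^{3/2})$. The minimum is therefore $\Theta(n)$, so $|M|=\Omega(n)$.

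There is no real obstacle here; the argument is bookkeeping. The only care needed is to use the big-$O$ constant from the preserver bound explicitly, so that the monotone inversion is valid, and to note that \Cref{cor:meg-set-preserver} applies to every $S$, in particular to $S=M$.
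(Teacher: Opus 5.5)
Your proposal is correct and matches the paper's proof: you plug the Bodwin--Vassilevska Williams bound $c\,n^{2/3}s^{4/3}+c\,n s^{2/3}$ into the preserver lemma (via \Cref{cor:meg-set-preserver}), split on which of the two terms is at least $m/(2c)$, and invert each. You also check the $m=\Theta(n^2)$ specialization explicitly, which the paper leaves implicit.
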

\begin{proof}
From \cite{BodwinWilliams2016}, we know that all graphs $G$ with $n$ vertices admit subsetwise preservers w.r.t.\ $S \subseteq V(G)$ of size at most $f(n, |S|) \le c n^{2/3} |S|^{4/3}+ c n |S|^{2/3}$, for some constant $c > 0$.

Hence, \Cref{cor:meg-set-preserver}
shows a lower bound on the size of a monitoring edge-geodetic set for graphs with $n$ veritces and $m$ edges 
of at least the minimum integer $s$ that satisfies 
$f(n,s) = c n^{2/3} s^{4/3}+ c n s^{2/3} \ge m$. Then, $n^{2/3} s^{4/3} \ge \frac{m}{2c}$, or $n s^{2/3} \ge \frac{m}{2c}$ (possibly both).
Hence $s \ge \frac{m^{3/4}}{(2c)^{3/4} n^{1/2}} = \Omega(\frac{m^{3/4}}{n^{1/2}})$ or $s \ge \frac{m^{3/2}}{(2c)^{3/2} n^{3/2}} = \Omega( \left( \frac{m}{n}\right)^{3/2} )$, which implies the claim.
\end{proof}

\subparagraph*{Generative AI Disclosure.}
Generative AI tools were used to assist with language
editing and improving the clarity of the manuscript.
All scientific content, mathematical results, and proofs
were developed, reviewed, and verified by the authors,
who take full responsibility for the final manuscript.



\bibliography{bibliography}

\end{document}